\DeclareMathAlphabet{\mathcal}{OMS}{cmsy}{m}{n}
\renewcommand*{\NAT@spacechar}{~} %
\renewcommand\bibsection %
\newcommand{\boundellipse}[3]%
{(#1) ellipse (#2 and #3)
}
\definecolor{darkblue}{rgb}{0,0,0.45}
\definecolor{darkred}{rgb}{0.6,0,0}
\definecolor{darkgreen}{rgb}{0.13,0.5,0}
\setlist[enumerate]{nosep} %
\setlist[itemize]{nosep} %
  \theoremstyle{plain}
\newaliascnt{lemma}{theorem}
\newaliascnt{corollary}{theorem}
\newaliascnt{definition}{theorem}
\newaliascnt{claim}{theorem}
\newaliascnt{proposition}{theorem}
\newaliascnt{remark}{theorem}
\newaliascnt{hypothesis}{theorem}
\newaliascnt{observation}{theorem}
\newtheorem{lemma}[lemma]{Lemma}
\newtheorem{claim}[claim]{Claim}
\newtheorem{remark}[remark]{Remark}
  \theoremstyle{definition}
\newtheorem{definition}[definition]{Definition}
\newcommand{\ignore}[1]{}
\newif\iflabel
\newif\ifdbs
\newif\ifamp
\IfSubStr\expandafter{\BODY}{\label}{\labeltrue}{\labelfalse}%
\IfSubStr\expandafter{\BODY}{\\}{\dbstrue}{\dbsfalse}%
\IfSubStr\expandafter{\BODY}{&}{\amptrue}{\ampfalse}%
\def\doitallstar{}\else\def\doitallstar{*}\fi
      \def\doitallname{align}%
      \def\doitallname{multline}%
    \def\doitallname{equation}
\edef\x{\endgroup
    \noexpand\begin{\doitallname\doitallstar}%
    \noexpand\BODY
    \noexpand\end{\doitallname\doitallstar}%
  }\x
\def\[#1\]{\begin{doitall}#1\end{doitall}}
\newcommand{\newreptheorem}[2]{\newtheorem*{rep@#1}{\rep@title}\newenvironment{rep#1}[1]{\def\rep@title{\bf #2 \ref*{##1}}\begin{rep@#1}}{\end{rep@#1}}}
\newtheorem*{rep@thm}{\rep@title} \newcommand{\newrepthm}[2]{%
\newenvironment{rep#1}[1]{%
\def\rep@title{\autoref{##1}}%
\begin{rep@thm} }%
{\end{rep@thm} } }
\tikzset{nomorepostaction/.code={\let\tikz@postactions\pgfutil@empty}}
\tikzset{middlearrow/.style={
        decoration={markings,
            mark= at position 0.5 with {\arrow{#1}} ,
        },
        postaction={decorate}
    }
}
\tikzset{onethirdarrow/.style={
        decoration={markings,
            mark= at position 0.33 with {\arrow{#1}} ,
        },
        postaction={decorate}
    }
}
\tikzset{twothirdarrow/.style={
        decoration={markings,
            mark= at position 0.67 with {\arrow{#1}} ,
        },
        postaction={decorate}
    }
}
\tikzset{endarrow/.style={
        decoration={markings,
            mark= at position 0.9 with {\arrow{#1}} ,
        },
        postaction={decorate}
    }
}
\tikzset{startarrow/.style={
        decoration={markings,
            mark= at position 0.1 with {\arrow{#1}} ,
        },
        postaction={decorate}
    }
}
\date{}
\title{A Tight Lower Bound for \edp \\on Planar DAGs\thanks{A preliminary version of this paper appeared in CIAC 2021.}}
\author{Rajesh~Chitnis
}
\affil{School of Computer Science, University of Birmingham, UK.\\
\texttt{rajeshchitnis@gmail.com}}
\tikzset{middlearrow/.style={
        decoration={markings,
            mark= at position 0.5 with {\arrow{#1}} ,
        },
        postaction={decorate}
    }
}
\definecolor{darkblue}{rgb}{0,0,1}
\definecolor{darkred}{rgb}{0.6,0,0}
\definecolor{darkgreen}{rgb}{0,1,0}
\newcommand{\gt}{\textsc{Grid-Tiling}\xspace}
\newcommand{\gtleq}{\textsc{Grid-Tiling-}$\leq$\xspace}
\newcommand{\disjp}{\textsc{Disjoint Paths}\xspace}
\newcommand{\edp}{\textsc{Edge-Disjoint Paths}\xspace}
\newcommand{\vdp}{\textsc{Vertex-Disjoint Paths}\xspace}
\newcommand{\Le}{\texttt{Left}}
\newcommand{\Ri}{\texttt{Right}}
\newcommand{\To}{\texttt{Top}}
\newcommand{\Bo}{\texttt{Bottom}}
\newcommand{\LB}{\text{LB}}
\newcommand{\TR}{\text{TR}}
\newcommand{\lefty}{\texttt{west}}
\newcommand{\righty}{\texttt{east}}
\newcommand{\topy}{\texttt{north}}
\newcommand{\bottomy}{\texttt{south}}
\newcommand{\poly}{\text{poly}}
\newcommand{\w}{\textbf{w}}
\newcommand{\x}{\textbf{x}}
\newcommand{\h}{\textbf{h}}
\newcommand{\ve}{\textbf{v}}
\newcommand{\Path}{\texttt{Path}}
\newcommand{\Matching}{\texttt{Matching}}
\newcommand{\Sink}{\texttt{Sink}}
\newcommand{\Source}{\texttt{Source}}
\newcommand{\Row}{\texttt{RowPath}}
\newcommand{\Column}{\texttt{ColumnPath}}
\newcommand{\z}{\textbf{z}}
\newcommand{\Horizontal}{\textsc{Horizontal}}
\newcommand{\Vertical}{\textsc{Vertical}}
\newcommand{\splitt}{\texttt{split}}
\begin{document}
\renewcommand*{\sectionautorefname}{Section}
\renewcommand*{\subsectionautorefname}{Section}
\renewcommand*{\subsubsectionautorefname}{Section}

\maketitle

\begin{abstract}
Given a graph $G$ and a set $\mathcal{T}=\big\{ (s_i, t_i) : 1\leq i\leq k \big\}$ of $k$ pairs, the \vdp (resp. \edp) problems asks to determine whether there exist pairwise vertex-disjoint (resp. edge-disjoint) paths $P_1, P_2, \ldots, P_k$ in $G$ such that $P_i$ connects $s_i$ to $t_i$ for each $1\leq i\leq k$. Unlike their undirected counterparts which are FPT (parameterized by $k$) from Graph Minor theory, both the edge-disjoint and vertex-disjoint versions in directed graphs were shown by Fortune et al. (TCS '80) to be NP-hard for $k=2$.
This strong hardness for \disjp on general directed graphs led to the study of parameterized complexity on special graph classes, e.g., when the underlying undirected graph is planar. For \vdp on planar directed graphs, Schrijver (SICOMP '94) designed an $n^{O(k)}$ time algorithm which was later improved upon by Cygan et al. (FOCS '13) who designed an FPT algorithm running in $2^{2^{O(k^2)}}\cdot n^{O(1)}$ time. To the best of our knowledge, the parameterized complexity of \edp on planar\footnote{A directed graph is planar if its underlying undirected graph is planar.} directed graphs is unknown.

\quad We resolve this gap by showing that \edp is W[1]-hard parameterized by the number $k$ of terminal pairs, even when the input graph is a planar directed acyclic graph (DAG). This answers a question of Slivkins (ESA '03, SIDMA '10). Moreover, under the Exponential Time Hypothesis (ETH), we show that there is no $f(k)\cdot n^{o(k)}$ algorithm for \edp on planar DAGs, where $k$ is the number of terminal pairs, $n$ is the number of vertices and $f$ is any computable function. Our hardness holds even if both the maximum in-degree and maximum out-degree of the graph are at most $2$.

We now place our result in the context of previously known algorithms and hardness for \edp on special classes of directed graphs:
\begin{itemize}
  \item \textbf{Implications for \edp on DAGs}: Our result shows that the $n^{O(k)}$ algorithm of Fortune et al. (TCS '80) for \edp on DAGs is asymptotically tight, even if we add an extra restriction of planarity. The previous best lower bound (also under ETH) for \edp on DAGs was $f(k)\cdot n^{o(k/\log k)}$ by Amiri et al. (MFCS '16, IPL '19) which improved upon the $f(k)\cdot n^{o(\sqrt{k})}$ lower bound implicit in Slivkins (ESA '03, SIDMA '10).
  \item \textbf{Implications for \edp on planar directed graphs}: As a special case of our result, we obtain that \edp on planar directed graphs is W[1]-hard parameterized by the number $k$ of terminal pairs. This answers a question of Cygan et al. (FOCS '13) and Schrijver (pp. 417-444, Building Bridges II, '19), and completes the landscape (see~\autoref{table:disjoint-paths-landscape}) of the parameterized complexity status of edge and vertex versions of the \disjp problem on planar directed and planar undirected graphs.
\end{itemize}

\end{abstract}

\section{Introduction}
\label{sec:intro}


The \disjp problem is one of the most fundamental problems in graph theory: given a graph and a set of $k$ terminal pairs, the question is to determine whether there exists a collection of $k$ pairwise disjoint paths where each path connects one of the given terminal pairs. There are four natural variants of this problem depending on whether we consider undirected or directed graphs and the edge-disjoint or vertex-disjoint requirement. In undirected graphs, the edge-disjoint version is reducible to the vertex-disjoint version in polynomial time by considering the line graph. In directed graphs, the edge-disjoint version and vertex-disjoint version are known to be equivalent in terms of designing exact algorithms. Besides its theoretical importance, the \disjp problem has found applications in VLSI design, routing, etc. The interested reader is referred to the surveys~\cite{survey-1} and ~\citep[Chapter 9]{survey-2} for more details.

The case when the number of terminal pairs $k$ are bounded is of special interest: given a graph with $n$ vertices and $k$ terminal pairs the goal is to try to design either FPT algorithms, i.e., algorithms whose running time is $f(k)\cdot n^{O(1)}$ for some computable function $f$, or XP algorithms, i.e., algorithms whose running time is $n^{g(k)}$ for some computable function $g$. We now discuss some of the known results on exact\footnote{This paper focuses on exact algorithms for the \disjp problem so we do not discuss here the results regarding (in)approximability.
} algorithms for different variants of the \disjp problem before stating our result.


\paragraph*{Prior work on exact algorithms for \disjp on undirected graphs:}
The NP-hardness for \edp and \vdp on undirected graphs was shown by Even et al.~\cite{DBLP:conf/focs/EvenIS75}. Solving the \vdp problem on undirected graphs is an important subroutine in checking whether a fixed graph $H$ is a minor of a graph $G$. Hence, a core algorithmic result of the seminal work of Robertson and Seymour was their FPT algorithm~\cite{DBLP:journals/jct/RobertsonS95b} for \vdp (and hence also \edp) on general undirected graphs which runs in $O(g(k)\cdot n^{3})$ time for some function $g$. The cubic dependence on the input size was improved to quadratic by Kawarabayashi et al.~\cite{DBLP:journals/jct/KawarabayashiKR12} who designed an algorithm running in $O(h(k)\cdot n^{2})$ time for some function $h$. Both the functions $g$ and $h$ are quite large (at least quintuple exponential as per~\cite{DBLP:journals/jct/AdlerKKLST17}). This naturally led to the search for faster FPT algorithms on planar graphs: Adler et al.~\cite{DBLP:journals/jct/AdlerKKLST17} designed an algorithm for \vdp on planar graphs which runs in $2^{2^{O(k^2)}}\cdot n^{O(1)}$ time. Very recently, this was improved to an single-exponential time FPT algorithm which runs in $2^{O(k^2)}\cdot n^{O(1)}$ time by Lokshtanov et al.~\cite{DBLP:conf/stoc/LokshtanovMP0Z20}.

There are two more variants of the \disjp problem: the \emph{half-integral} version where each vertex/edge can belong to at most two paths, and the \emph{parity} version where the length of each path is required to respect a given parity (even or odd) condition.
FPT algorithms are known for each of the following versions of \vdp on general undirected graphs: the half-integral version~\cite{DBLP:conf/stoc/Kleinberg98,DBLP:conf/soda/KawarabayashiR08}, the half-integral version with parity~\cite{DBLP:conf/soda/KawarabayashiR09} and finally just the parity version (without half-integral)~\cite{DBLP:conf/focs/KawarabayashiRW11}.

%



\paragraph*{Prior work on exact algorithms for \disjp on directed graphs:}
Unlike undirected graphs where both \edp and \vdp are FPT parameterized by $k$, the \disjp problem becomes significantly harder for directed graphs: Fortune et al.~\cite{DBLP:journals/tcs/FortuneHW80} showed that both \edp and \vdp on general directed graphs are NP-hard even for $k=2$. For general directed graphs, Giannopoulou et al.~\cite{kreutzer-new} recently designed an XP algorithm for the half-integral version of \disjp: here the goal is to either find a set of $k$ paths $P_{1},P_{2},\ldots, P_{k}$ such that $P_{i}$ is an $s_i \leadsto t_i$ path for each $i\in [k]$ and each vertex in the graph appears in at most two of the paths, or conclude that the given instance has no solution with pairwise disjoint paths. This algorithm improves upon an older XP algorithm of Kawarabayashi et al.~\cite{kreutzer-old} for the quarter-integral case in general digraphs.

The \disjp problem has also been extensively studied on special subclasses of digraphs:
\begin{itemize}
  \item \textbf{\disjp on DAGs}: It is easy to show that \vdp and \edp are equivalent on the class of directed acyclic graphs (DAGs). Fortune et al.~\cite{DBLP:journals/tcs/FortuneHW80} designed an $n^{O(k)}$ algorithm for \edp on DAGs. Slivkins~\cite{DBLP:journals/siamdm/Slivkins10} showed W[1]-hardness for \edp on DAGs and a $f(k)\cdot n^{o(\sqrt{k})}$ lower bound (for any computable function $f$) under the Exponential Time Hypothesis~\cite{eth,eth-2} (ETH) follows from that reduction. Amiri et al.~\cite{DBLP:journals/ipl/AmiriKMR19}\footnote{We note that~\cite{DBLP:journals/ipl/AmiriKMR19} considers a more general version than \disjp which allows congestion} improved the lower bound to $f(k)\cdot n^{o(k/\log k)}$ thus showing that the algorithm of Fortune et al.~\cite{DBLP:journals/tcs/FortuneHW80} is almost-tight.

  \item \textbf{\disjp on directed planar graphs}: Schrijver~\cite{DBLP:journals/siamcomp/Schrijver94} designed an $n^{O(k)}$ algorithm for \vdp on directed planar graphs. This was improved upon by Cygan et al.~\cite{DBLP:conf/focs/CyganMPP13} who designed an FPT algorithm running in $2^{2^{O(k^2)}}\cdot n^{O(1)}$ time. As pointed out by Cygan et al.~\cite{DBLP:conf/focs/CyganMPP13}, their FPT algorithm for \vdp on directed planar graphs does not work for the \edp problem. The status of parameterized complexity (parameterized by $k$) of \edp on directed planar graphs remained an open question.~\autoref{table:disjoint-paths-on-directed-graphs} gives a summary of known results for exact algorithms for \disjp on (subclasses of) directed graphs.
\end{itemize}

\begin{table}[!h]
\begin{tabular}{ |c|c|c|c| }
\hline
\textbf{Graph class} & \textbf{Problem type} & \textbf{Algorithm} & \textbf{Lower Bound}  \\ \hline \hline
General graphs & Vertex-disjoint = edge-disjoint & ???? & NP-hard for $k=2$\\ \hline
\multirow{3}{*}{DAGs} & \multirow{3}{*}{Vertex-disjoint = edge-disjoint} & \multirow{3}{*}{$n^{O(k)}$~\cite{DBLP:journals/tcs/FortuneHW80}} & $f(k)\cdot n^{o(\sqrt{k})}$~\cite{DBLP:journals/siamdm/Slivkins10}\\
 &  &  & $f(k)\cdot n^{o(k/\log k)}$~\cite{DBLP:journals/ipl/AmiriKMR19}\\
 & &  & $f(k)\cdot n^{o(k)}$ \textbf{[this paper]}\\ \hline
\multirow{3}{*}{Planar graphs} & \multirow{2}{*}{Vertex-disjoint} & $n^{O(k)}$~\cite{DBLP:journals/siamcomp/Schrijver94} & \multirow{2}{*}{????} \\
 &  & $2^{2^{O(k^2)}}\cdot n^{O(1)}$~\cite{DBLP:conf/focs/CyganMPP13} & \\\cline{2-4}
 & Edge-disjoint & ???? & $f(k)\cdot n^{o(k)}$ \textbf{[this paper]} \\ \hline
\multirow{2}{*}{Planar DAGs} & Vertex-disjoint & $2^{2^{O(k^2)}}\cdot n^{O(1)}$~\cite{DBLP:conf/focs/CyganMPP13} & ????\\
 & Edge-disjoint & $n^{O(k)}$~\cite{DBLP:journals/tcs/FortuneHW80} & $f(k)\cdot n^{o(k)}$ \textbf{[this paper]} \\
\hline
\end{tabular}
~\\
\caption{The landscape of parameterized complexity results for \disjp on directed graphs. All lower bounds are under the Exponential Time Hypothesis (ETH). To the best of our knowledge, the entries marked with ???? have no known non-trivial results.
\label{table:disjoint-paths-on-directed-graphs}
}
\end{table}

\paragraph*{Our result:}
We resolve this open question by showing a slightly stronger result: the \edp problem is W[1]-hard parameterized by $k$ when the input graph is a planar DAG whose max in-degree and max out-degree are both at most $2$. First we define the \edp problem formally below, and then state our result:

\begin{center}
\noindent\framebox{\begin{minipage}{6in}
\textbf{\edp}\\
\emph{\underline{Input}}: A directed graph $G=(V,E)$, and a set
$\mathcal{T}\subseteq V\times V$ of $k$ terminal pairs given by $\big\{(s_i, t_i) : 1\leq i\leq k\big\}$. \\
\emph{\underline{Question}}: Do there exist $k$ pairwise edge-disjoint paths $P_1, P_2, \ldots, P_k$
such that $P_i$ is an $s_i \leadsto t_i$ path for each $1\leq i\leq k$?\\
\emph{\underline{Parameter}}: $k$
\end{minipage}}
\end{center}


\begin{restatable}{theorem}{mainthm}
The \edp problem on planar DAGs is W[1]-hard parameterized by the number $k$ of terminal pairs. Moreover, under ETH, the \edp problem on planar DAGs cannot be solved $f(k)\cdot n^{o(k)}$  time where $f$ is any computable function, $n$ is the number of vertices and $k$ is the number of terminal pairs. The hardness holds even if both the maximum in-degree and maximum out-degree of the graph are at most $2$.
\label{thm:main-result}
\end{restatable}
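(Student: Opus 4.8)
\emph{Proof sketch.} The plan is to reduce from \gt. Recall that an instance of \gt consists of an integer $k$ together with sets $S_{i,j}\subseteq[n]\times[n]$ for $(i,j)\in[k]\times[k]$, and the question is whether there exist $\gamma_1,\dots,\gamma_k\in[n]$ and $\delta_1,\dots,\delta_k\in[n]$ with $(\gamma_i,\delta_j)\in S_{i,j}$ for all $i,j\in[k]$; it is known that \gt is W[1]-hard parameterized by $k$ and, unless ETH fails, has no $f(k)\cdot n^{o(k)}$-time algorithm. From such an instance I would build, in polynomial time, a planar DAG $G$ of maximum in-degree and out-degree at most $2$ with exactly $k'=2k$ terminal pairs: one ``row'' pair $(s^r_i,t^r_i)$ for each $i\in[k]$ and one ``column'' pair $(s^c_j,t^c_j)$ for each $j\in[k]$, so that $G$ admits $k'$ pairwise edge-disjoint paths connecting the respective pairs if and only if the \gt instance is a yes-instance. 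Since $k'=\Theta(k)$ and $|V(G)|=\poly(n,k)$, any $f(k')\cdot n^{o(k')}$-time algorithm for \edp on planar DAGs would yield an $f'(k)\cdot n^{o(k)}$-time algorithm for \gt, and the map above is an FPT reduction; both claimed consequences follow at once. (One could equally start from the monotone variant \gtleq, but this is not needed here.)

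\textbf{Global layout.} I would place a $k\times k$ array of cell gadgets $C_{i,j}$, with $C_{i,j}$ encoding $S_{i,j}$ and drawn to the left of $C_{i,j+1}$ and above $C_{i+1,j}$. Each cell carries $n$ ``horizontal tracks'' that enter on its left side and leave on its right side, and $n$ ``vertical tracks'' that enter on its top side and leave on its bottom side. Consecutive cells in a row are wired together by identifying the $a$-th right track of $C_{i,j}$ with the $a$-th left track of $C_{i,j+1}$, and symmetrically for vertical tracks down a column; the source $s^r_i$ fans out into the left tracks of $C_{i,1}$, the sink $t^r_i$ collects the right tracks of $C_{i,k}$, and likewise for the column pairs. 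Because the track index is preserved from one cell to the next, the row-$i$ path is confined to a single index $\gamma_i\in[n]$ throughout row $i$ and the column-$j$ path to a single index $\delta_j\in[n]$ throughout column $j$ --- precisely the consistency constraint of \gt. Orienting every edge towards the bottom-right makes $G$ acyclic while still allowing a horizontal path to traverse its track from left to right and a vertical path to traverse its track from top to bottom.

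\textbf{Cell gadget.} Inside $C_{i,j}$ the $n$ horizontal and $n$ vertical tracks cross in an $n\times n$ array of crossing points, the point $(a,b)$ lying on horizontal track $a$ and vertical track $b$. At every \emph{absent} position $(a,b)\notin S_{i,j}$ the two tracks simply share an edge there, so at most one of the two relevant paths can pass through $(a,b)$. At each \emph{present} position $(a,b)\in S_{i,j}$ I would instead splice in a small planar \emph{crossover} gadget --- a bounded-degree acyclic fragment, consistent with the bottom-right orientation --- through which a left-to-right path and a top-to-bottom path can be routed simultaneously and edge-disjointly, but which offers a path no way to leave its own track (in particular no ``turn'' from a horizontal track onto a vertical one). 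Hence the row-$i$ path, pinned to track $\gamma_i$, and the column-$j$ path, pinned to track $\delta_j$, meet inside $C_{i,j}$ only at the crossing point $(\gamma_i,\delta_j)$, and they can pass there edge-disjointly if and only if that point carries a crossover gadget, i.e.\ if and only if $(\gamma_i,\delta_j)\in S_{i,j}$. Moreover, since the bottom-right orientation never lets a path move up or left, a horizontal path that ever stepped onto a vertical track could never get back to its own row and hence never reach its sink; this excludes all ``cheating'' routes, so $G$ has the desired $k'$ edge-disjoint paths exactly when the \gt instance is a yes-instance.

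\textbf{Expected main obstacle.} The technical core --- and where essentially all the case analysis lives --- will be the crossover gadget: exhibiting an explicit planar DAG fragment of in- and out-degree at most $2$ that (i)~lets a west-to-east path and a north-to-south path coexist edge-disjointly, (ii)~admits no other routing that could be exploited, and (iii)~fits together with the absent-position conflicts and the inter-cell wiring into a single planar DAG; verifying (ii), i.e.\ that every edge-disjoint solution of $G$ is one of the intended grid routings, is the delicate part. The only remaining chore is the degree bound: the one place large degree arises is the fan-out (resp.\ fan-in) at each terminal $s^r_i$, $s^c_j$ (resp.\ $t^r_i$, $t^c_j$), where a single path must select one of $n$ tracks; this is realized by a depth-$\lceil\log_2 n\rceil$ binary out-tree (resp.\ in-tree) carrying only that one path, which keeps $G$ planar and acyclic and adds only $\poly(n,k)$ vertices, while all other gadgets, including the crossover, are built with in- and out-degree at most $2$ from the outset.
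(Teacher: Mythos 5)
Your construction plan closely mirrors the paper's: a $k\times k$ array of $n\times n$ crossing arrays, with a shared-edge conflict at each position outside $S_{i,j}$ (the paper's splitting operation), a free crossing at each position in $S_{i,j}$, inter-cell wiring linking tracks between adjacent cells, and degree reduction at the terminals via binary in/out-trees. The genuine gap is the crossover gadget you single out as the technical core, and specifically your requirement (ii). A planar DAG fragment whose four ports lie on its outer boundary in cyclic order $T,R,B,L$ cannot both admit edge-disjoint $L\to R$ and $T\to B$ paths and forbid every $L\leadsto B$ routing. Any edge-disjoint $L\to R$ path $P$ and $T\to B$ path $Q$ in such a fragment must share some vertex $v$ by planarity; at $v$, $P$ uses an in-edge $e_1$ and out-edge $e_2$ and $Q$ uses an in-edge $e_3$ and out-edge $e_4$, all four distinct (so $v$ already has in- and out-degree $2$, leaving no room to block anything by the degree bound), and following $P$ from $L$ to $v$ and then $Q$ from $v$ to $B$ is a legal $L\leadsto B$ path. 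Thus the gadget unavoidably permits a horizontal path to turn downward, and your claim that the row path is ``confined to a single index $\gamma_i$'' does not hold: the path can step down one track inside a cell and immediately turn right again, never leaving row $i$ yet changing its track index. Your ``could never get back to its own row'' argument only rules out escaping the row, not drifting within it.

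What the construction actually enforces is that the track index of a row path is monotone non-decreasing across the row, and likewise for a column path down its column, because the bottom-right orientation forbids up/left moves but allows the turn. So the backward direction yields, for each cell, a pair $(\gamma_{i,j},\delta_{i,j})\in S_{i,j}$ with $\gamma_{i,j}\leq\gamma_{i,j+1}$ and $\delta_{i,j}\leq\delta_{i+1,j}$ --- a solution to \gtleq, not to \gt. This is exactly why the paper reduces from \gtleq, the monotone variant you explicitly set aside as ``not needed'': the $\leq$ constraints are precisely what a planar DAG orientation can encode, and equality is strictly out of reach. The repair is local: switch the source problem to \gtleq, replace the ``pinned track'' claim by the counting lemma that any bottom-to-top and left-to-right transversal of a directed grid must share a vertex (and hence, after splitting, a position of $S_{i,j}$), and for the forward direction replace your strict $1$--$1$ inter-cell identification by a monotone connector (a directed path with matchings on both sides, as the paper uses), so that straight row/column paths can be used inside each cell and all drift is deferred to the connectors. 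With those changes your sketch coincides with the paper's proof.
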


\noindent Recall that the Exponential Time Hypothesis (ETH) states that $n$-variable $m$-clause 3-SAT cannot be solved in $2^{o(n)}\cdot (n+m)^{O(1)}$ time~\cite{eth,eth-2}. Prior to our result, only the NP-completeness of \edp on planar DAGs was known~\cite{DBLP:journals/dam/Vygen95}. The reduction used in~\autoref{thm:main-result} is heavily inspired by some known reductions: in particular, the planar DAG structure (\autoref{fig:main}) is from~\cite{rajesh-soda-14,DBLP:journals/siamcomp/ChitnisFHM20} and the splitting operation (\autoref{fig:split} and~\autoref{defn:splitting-operation}) is from~\cite{DBLP:conf/csr/ChitnisF18,DBLP:journals/algorithmica/ChitnisFS19}.
We view the simplicity of our reduction as evidence of success of the (now) established methodology of showing W[1]-hardness (and ETH-based hardness) for planar graph problems using \gt and its variants.

\paragraph*{Placing~\autoref{thm:main-result} in the context of prior work:}
Theorem~\ref{thm:main-result} answers a question of Slivkins~\cite{DBLP:journals/siamdm/Slivkins10} regarding the parameterized complexity of \edp on planar DAGs. As a special case of~\autoref{thm:main-result}, one obtains that \edp on planar directed graphs is W[1]-hard parameterized by the number $k$ of terminal pairs: this answers a question of Cygan et al.~\cite{DBLP:conf/focs/CyganMPP13} and Schrijver~\cite{schrijver-building-bridges-II}. The W[1]-hardness result of~\autoref{thm:main-result} completes the landscape (see~\autoref{table:disjoint-paths-landscape}) of parameterized complexity of edge-disjoint and vertex-disjoint versions of the \disjp problem on planar directed and planar undirected graphs.~\autoref{thm:main-result} also shows that the $n^{O(k)}$ algorithm of Fortune et al.~\cite{DBLP:journals/tcs/FortuneHW80} for \edp on DAGs is asymptotically optimal, even if we add an extra restriction of planarity to the mix.~\autoref{thm:main-result} adds another problem (\edp on DAGs) to the relatively small list of problems for which it is provably known that the planar version has the same asymptotic complexity as the problem on general graphs: the only such other problems we are aware of are~\cite{DBLP:journals/algorithmica/ChitnisFS19,DBLP:journals/siamcomp/ChitnisFHM20,DBLP:conf/focs/MarxPP18}.
This is in contrast to the fact that for several problems~\cite{DBLP:conf/icalp/Marx12,DBLP:conf/icalp/KleinM12,DBLP:conf/focs/MarxPP18,KleinM14,FominLMPPS16,DemaineFHT05,DBLP:conf/stacs/PilipczukPSL13,DBLP:conf/esa/MarxP15,DBLP:conf/fsttcs/LokshtanovSW12,DBLP:journals/corr/AboulkerBHMT15,FominKLPS16}.  the planar version is easier by (roughly) a square root factor in the exponent as compared to general graphs, and there are lower bounds indicating that this improvement is essentially the best possible~\cite{DBLP:conf/icalp/Marx13}.

\begin{table}[!h]
\begin{tabular}{ |c|c|c| }
\hline
\textbf{Graph class} & \textbf{Problem type} & \textbf{Parameterized Complexity parameterized by $k$}    \\ \hline \hline
\multirow{2}{*}{Planar undirected} & Vertex-disjoint & \multirow{2}{*}{FPT~\cite{DBLP:journals/jct/AdlerKKLST17,DBLP:conf/stoc/LokshtanovMP0Z20,DBLP:journals/jct/KawarabayashiKR12,DBLP:journals/jct/RobertsonS95b}} \\ \cline{2-2}
 & Edge-disjoint & \\ \hline
\multirow{2}{*}{Planar directed} & Vertex-disjoint & FPT~\cite{DBLP:conf/focs/CyganMPP13} \\ \cline{2-3}
 & Edge-disjoint & W[1]-hard \textbf{[this paper]}\\ \hline
\end{tabular}
~\\
\caption{The landscape of parameterized complexity results for the four different versions (edge-disjoint vs vertex-disjoint \& directed vs undirected) of  \disjp on planar graphs.
\label{table:disjoint-paths-landscape}
}
\end{table}

\paragraph*{Organization of the paper:} In~\autoref{subsec:construction} we describe the construction of the instance $(G_2, \mathcal{T})$ of \edp. The two directions of the reduction are shown in~\autoref{subsec:hard-direction} and~\autoref{subsec:easy-direction} respectively. Finally,~\autoref{subsec:proof-of-main-theorem} contains the proof of~\autoref{thm:main-result}. We conclude with some open questions in~\autoref{sec:conclusion}.

\paragraph*{Notation:}
\label{secapp:notation}

All graphs considered in this paper are directed and do not have self-loops or multiple edges. We use (mostly) standard graph theory notation~\cite{diestel-book}. The set $\{1,2,3,\ldots, M\}$ is denoted by $[M]$ for each $M\in \mathbb{N}$. A directed edge (resp. path) from $s$ to $t$ is denoted by $s\to t$ (resp. $s\leadsto t$). 
 We use the \textbf{non-standard} notation (to avoid having to consider different cases in our proofs): $s\leadsto s$ \textbf{does not} represent a self-loop but rather is to be viewed as \emph{``just staying put"} at the vertex $s$. If $A,B\subseteq V(G)$ then we say that there is an $A\leadsto B$ path if and only if there exists two vertices $a\in A, b\in B$ such that there is an $a\leadsto b$ path. For $A\subseteq V(G)$ we define $N_{G}^{+}(A) = \big\{ x\notin A\ : \exists\ y\in A\ \text{such that } (y,x)\in E(G) \big\}$ and $N_{G}^{-}(A) = \big\{ x\notin A\ : \exists\ y\in A\ \text{such that } (x,y)\in E(G) \big\}$. For $A\subseteq V(G)$ we define $G[A]$ to be the graph induced on the vertex set $A$, i.e., $G[A]:= (A,E_A)$ where $E_A:=E(G)\cap (A\times A)$.

\section{W[1]-hardness of \edp on Planar DAGs}
\label{sec:main-lb}

To obtain W[1]-hardness for \edp on planar DAGs, we reduce from the \gtleq problem~\cite{DBLP:conf/compgeom/MarxS14} which is defined below:
\begin{center}
\noindent\framebox{\begin{minipage}{6in}
\textbf{\gtleq}\\
\emph{\underline{Input}}: Integers $k, N$, and a collection $\mathcal{S}$ of $k^2$ sets given by $\big\{ S_{x,y}\subseteq
[N]\times [N]\ : 1\leq x, y\leq k \big\}$.\\
\emph{\underline{Question}}: For each $1\leq x, y\leq k$ does there exist a pair
$\gamma_{x,y}\in S_{x,y}$ such that
\begin{itemize}
\item if $\gamma_{x,y}=(a,b)$ and $\gamma_{x+1,y}=(a',b')$ then $b\leq b'$, and
\item if $\gamma_{x,y}=(a,b)$ and $\gamma_{x,y+1}=(a',b')$ then $a\leq a'$
\end{itemize}
\end{minipage}}
\end{center}

%

\begin{center}
\begin{figure}[!h]
\vspace{-5mm}
\includegraphics[width=6in]{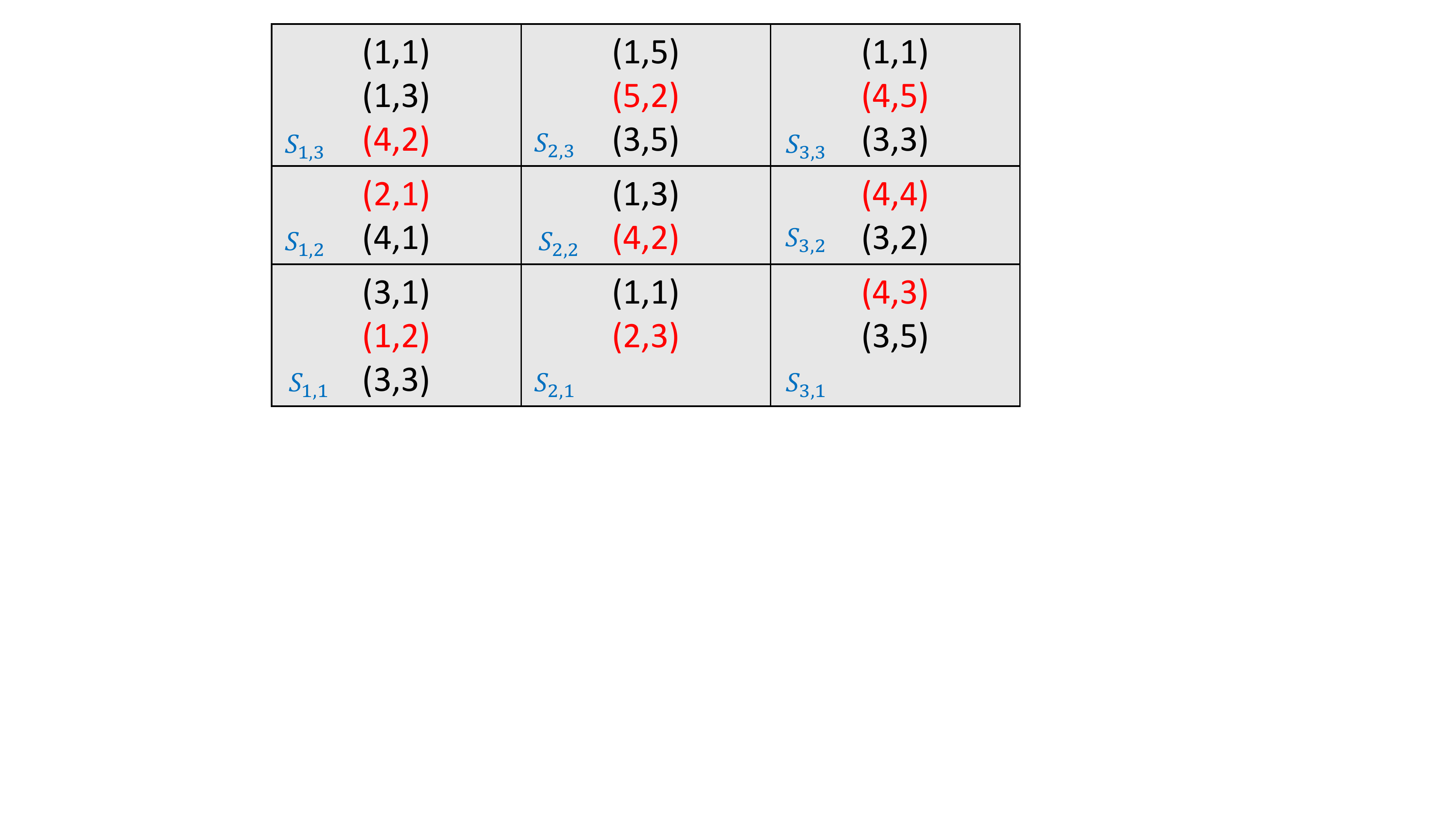}
\vspace{-45mm}

\caption{An instance of \gtleq with $k=3, N=5$ and a solution highlighted in red.
Note that in a solution, the second coordinates in a row are non-decreasing as we go from left to right and the first coordinates in a column are non-decreasing as we go from bottom to top.
\label{fig:gtleq}
}
\end{figure}
\end{center}

~\autoref{fig:gtleq} gives an illustration of an instance of \gtleq along with a solution. It is known~\citep[Theorem 14.30]{fpt-book} that \gtleq is W[1]-hard parameterized by $k$, and under the Exponential Time Hypothesis (ETH) has no $f(k)\cdot N^{o(k)}$ algorithm for any computable function $f$. We will exploit this result by reducing an instance $(k,N,\mathcal{S})$ of \gtleq in $\poly(N,k)$ time to an instance $(G_2, \mathcal{T})$ of \edp such that $G_2$ is a planar DAG, number of vertices in $G_2$ is $|V(G_2)|=O(N^2 k^2)$ and number of terminal pairs is $|\mathcal{T}|=2k$.

\begin{remark}
\label{remark:issue-of-gtleq-swapping-indices}
\normalfont
Our definition of \gtleq above is slightly different than the one given in~\citep[Theorem 14.30]{fpt-book}: there the constraints are first coordinate of $\gamma_{x,y}$ is $\leq$ first coordinate of $\gamma_{x+1,y}$ and second coordinate of $\gamma_{x,y}$ is $\leq$ second coordinate of $\gamma_{x,y+1}$. By rotating the axis by $90^{\circ}$, i.e., swapping the indices, our version of \gtleq is equivalent to that from~\citep[Theorem 14.30]{fpt-book}.
\end{remark}




\subsection{Construction of the instance $(G_2,\mathcal{T})$ of \edp}
\label{subsec:construction}


Consider an instance $(N,k,\mathcal{S})$ of \gtleq. We now build an instance $(G_2,\mathcal{T})$ of \edp as follows: first in Section 2.1.1 we describe the construction of an intermediate graph $G_1$ (\autoref{fig:main}). The splitting operation is defined in Section 2.1.2, and the graph $G_2$ is obtained from $G_1$ by splitting each (black) grid vertex.

%

\subsubsection{Construction of the graph $G_1$}
\label{subsubsec:construction-of-G1}

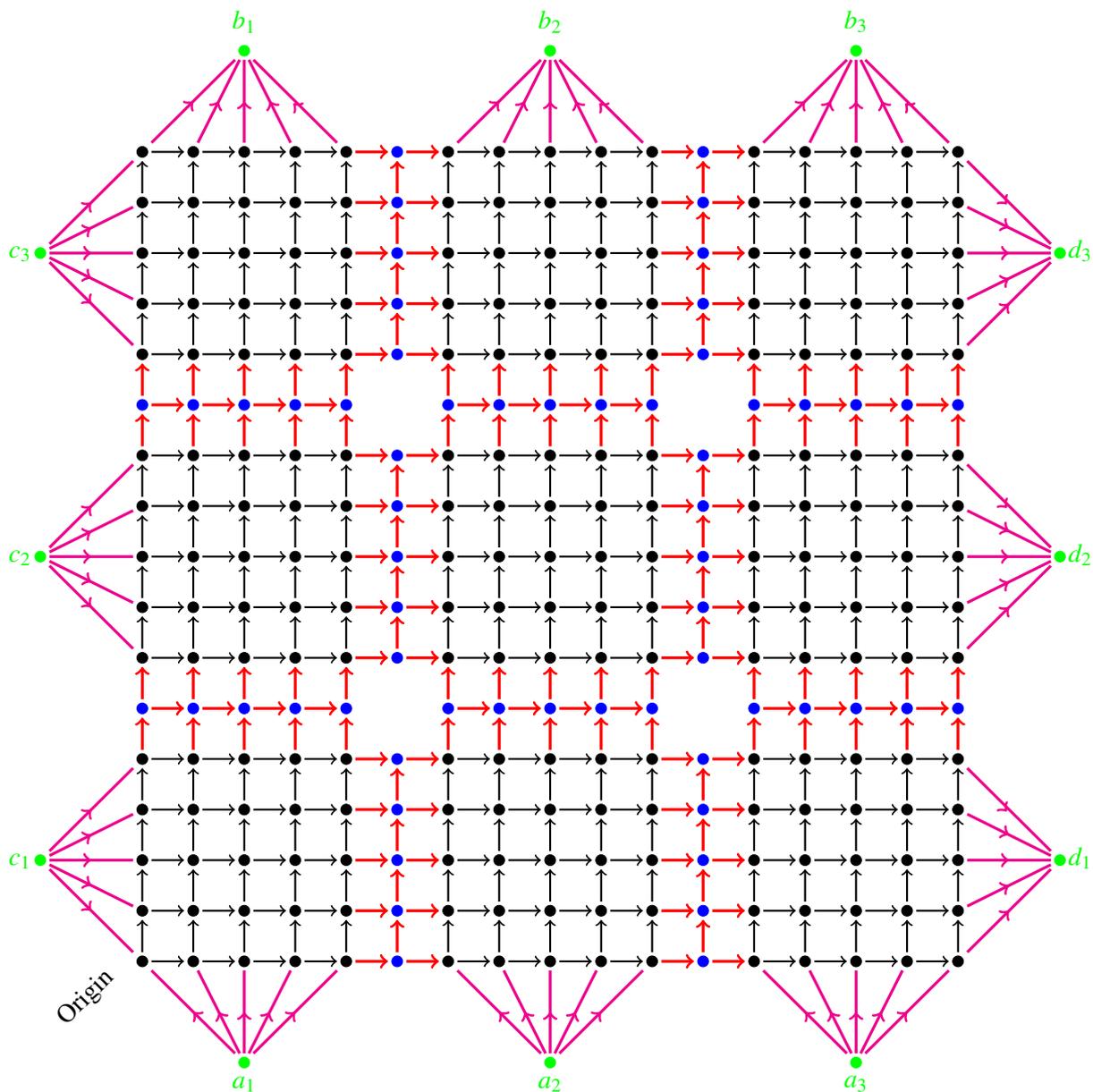
\begin{figure}
\centering
\begin{tikzpicture}[scale=0.75]


\foreach \i in {0,1,2}
    \foreach \j in {0,1,2}
{
\begin{scope}[shift={(6*\i,6*\j)}]

        \foreach \x in {1,2,...,5}
        \foreach \y in {1,2,...,5}
    {
        \draw [black] plot [only marks, mark size=3, mark=*] coordinates {(\x,\y)};
    }

        \foreach \x in {1,2,...,5}
    \foreach \y in {1,2,3,4}
    {
        \path (\x,\y) node(a) {} (\x,\y+1) node(b) {};
        \draw[thick,->] (a) -- (b);
    }

        \foreach \y in {1,2,...,5}
        \foreach \x in {1,2,3,4}
    {
        \path (\x,\y) node(a) {} (\x+1,\y) node(b) {};
        \draw[thick,->] (a) -- (b);
    }

\end{scope}
}

\foreach \i in {0,1,2}
\foreach \j in {1,2}
{
\begin{scope}[shift={(6*\i,6*\j-6)}]

\foreach \x in {1,2,...,5}
{
    \draw [blue] plot [only marks, mark size=3, mark=*] coordinates {(\x,6)};
    \path (\x,5) node(a) {} (\x,6) node(b) {};
        \draw[red,very thick,->] (a) -- (b);
    \path (\x,6) node(a) {} (\x,7) node(b) {};
        \draw[red,very thick,->] (a) -- (b);
}

\foreach \x in {1,2,...,4}
{
    \path (\x,6) node(a) {} (\x+1,6) node(b) {};
        \draw[red,very thick,->] (a) -- (b);
}
\end{scope}
}

\foreach \j in {0,1,2}
\foreach \i in {0,1}
{
\begin{scope}[shift={(6*\i,6*\j)}]

\foreach \y in {1,2,...,5}
{
    \draw [blue] plot [only marks, mark size=3, mark=*] coordinates {(6,\y)};
    \path (5,\y) node(a) {} (6,\y) node(b) {};
        \draw[red,very thick,->] (a) -- (b);
    \path (6,\y) node(a) {} (7,\y) node(b) {};
        \draw[red,very thick,->] (a) -- (b);
}

\foreach \y in {1,2,...,4}
{
    \path (6,\y) node(a) {} (6,\y+1) node(b) {};
        \draw[red,very thick,->] (a) -- (b);
}
\end{scope}
}

\draw [green] plot [only marks, mark size=3, mark=*] coordinates {(-1,3)}
node[label={[xshift=-3mm,yshift=-4mm] $c_{1}$}] {} ;

\draw [green] plot [only marks, mark size=3, mark=*] coordinates {(-1,9)}
node[label={[xshift=-3mm,yshift=-4mm] $c_{2}$}] {} ;

\draw [green] plot [only marks, mark size=3, mark=*] coordinates {(-1,15)}
node[label={[xshift=-3mm,yshift=-4mm] $c_{3}$}] {} ;

\draw [green] plot [only marks, mark size=3, mark=*] coordinates {(19,3)}
node[label={[xshift=3mm,yshift=-4mm] $d_{1}$}] {} ;

\draw [green] plot [only marks, mark size=3, mark=*] coordinates {(19,9)}
node[label={[xshift=3mm,yshift=-4mm] $d_{2}$}] {} ;

\draw [green] plot [only marks, mark size=3, mark=*] coordinates {(19,15)}
node[label={[xshift=3mm,yshift=-4mm] $d_{3}$}] {} ;

\foreach \k in {0,1,2}
{
\begin{scope}[shift={(0,6*\k)}]
\foreach \y in {1,2,...,5}
    {
        \path (-1,3) node(a) {} (1,\y) node(b) {};
        \draw[magenta,very thick,middlearrow={>}] (a) -- (b);

        \path (17,\y) node(a) {} (19,3) node(b) {};
        \draw[magenta,very thick,middlearrow={>}] (a) -- (b);
    }
\end{scope}
}


\draw [green] plot [only marks, mark size=3, mark=*] coordinates {(3,-1)}
node[label={[xshift=0mm,yshift=-7mm] $a_{1}$}] {} ;

\draw [green] plot [only marks, mark size=3, mark=*] coordinates {(9,-1)}
node[label={[xshift=0mm,yshift=-7mm] $a_{2}$}] {} ;

\draw [green] plot [only marks, mark size=3, mark=*] coordinates {(15,-1)}
node[label={[xshift=0mm,yshift=-7mm] $a_{3}$}] {} ;

\draw [green] plot [only marks, mark size=3, mark=*] coordinates {(3,19)}
node[label={[xshift=0mm,yshift=0mm] $b_{1}$}] {} ;

\draw [green] plot [only marks, mark size=3, mark=*] coordinates {(9,19)}
node[label={[xshift=0mm,yshift=0mm] $b_{2}$}] {} ;

\draw [green] plot [only marks, mark size=3, mark=*] coordinates {(15,19)}
node[label={[xshift=0mm,yshift=0mm] $b_{3}$}] {} ;

\foreach \k in {0,1,2}
{
\begin{scope}[shift={(6*\k,0)}]
\foreach \x in {1,2,...,5}
    {
        \path (3,-1) node(a) {} (\x,1) node(b) {};
        \draw[magenta,very thick,middlearrow={>}] (a) -- (b);

        \path (3,19) node(a) {} (\x,17) node(b) {};
        \draw[magenta,very thick,middlearrow={>}] (b) -- (a);
    }
\end{scope}
}

\draw [rotate=45,black] plot [only marks, mark size=0, mark=*] coordinates
{(0,0)}
node[label={[rotate=45,xshift=0mm,yshift=-2mm] Origin}] {} ;

\end{tikzpicture}
\caption{The graph $G_1$ constructed for the input $k=3$ and $N=5$ via the construction described in~\autoref{subsubsec:construction-of-G1}. The final graph $G_2$ for the \edp instance is obtained from $G_1$ by the splitting operation (\autoref{defn:splitting-operation}) as described in~\autoref{subsubsec:construction-of-G2}.
\label{fig:main}
}
\end{figure}

Given integers $k$ and $N$, we build a directed graph $G_1$ as follows (refer to~\autoref{fig:main}):
\begin{enumerate}
\item \textbf{Origin}: The origin is marked at the bottom left corner of~\autoref{fig:main}. This is defined just so we can view the naming of the vertices as per the usual $X-Y$ coordinate system: increasing horizontally towards the right, and vertically towards the top.

\item \textbf{Grid (black) vertices and edges}: For each $1\leq i,j\leq k$ we introduce a (directed) $N\times N$ grid $G_{i,j}$ where the column numbers increase from $1$ to $N$ as we go from left to right, and the row numbers increase from $1$ to $N$ as we go from bottom to top. For each $1\leq q,\ell\leq N$ the unique vertex which is the intersection of the $q^{\text{th}}$ column and $\ell^{\text{th}}$ row of $G_{i,j}$ is denoted by $\w_{i,j}^{q,\ell}$. The vertex set and edge set of $G_{i,j}$ is defined formally as:
\begin{itemize}


    \item $V(G_{i,j})= \big\{ \w_{i,j}^{q,\ell} : 1\leq q,\ell\leq N \big\}$


    \item $E(G_{i,j}) = \Big(\bigcup_{(q,\ell)\in [N]\times [N-1]} \w_{i,j}^{q,\ell} \to \w_{i,j}^{q,\ell+1} \Big) \cup \Big( \bigcup_{(q,\ell)\in [N-1]\times [N]} \w_{i,j}^{q,\ell} \to \w_{i,j}^{q+1,\ell} \Big)$
\end{itemize}

All vertices and edges of $G_{i,j}$ are shown in~\autoref{fig:main} using black color. Note that each horizontal edge of the grid $G_{i,j}$ is oriented to the right, and each vertical edge is oriented towards the top. We will later (\autoref{defn:splitting-operation}) modify the grid $G_{i,j}$ to \emph{represent} the set $S_{i,j}$.

For each $1\leq i,j\leq k$ we define the set of \emph{boundary} vertices of the grid $G_{i,j}$ as follows:
\begin{equation}\label{eqn:left-right-top-bottom-G1}
 \begin{aligned}
        \Le(G_{i,j}) := \big\{ \w_{i,j}^{1,\ell}\ :\ \ell\in [N]  \big\}\ ;\
        \Ri(G_{i,j}) := \big\{ \w_{i,j}^{N,\ell}\ :\ \ell\in [N]  \big\} \\
        \To(G_{i,j}) := \big\{ \w_{i,j}^{\ell,N}\ :\ \ell\in [N]  \big\}\ ;\
        \Bo(G_{i,j}) := \big\{ \w_{i,j}^{\ell,1}\ :\ \ell\in [N]  \big\}
       \end{aligned}
\end{equation}

\item \textbf{Arranging the $k^2$ different $N\times N$ grids $\{G_{i,j}\}_{1\leq i,j\leq k}$ into a large $k\times k$ grid}: We place the grids $G_{i,j}$ into a big $k\times k$ grid of grids left to right according to growing $i$ and from bottom to top according to growing $j$ (see the naming of the sets in~\autoref{fig:gtleq} in blue color). In particular,the grid $G_{1,1}$ is at bottom left corner of the construction, the grid $G_{k,k}$ at the top right corner, and so on.

\item \textbf{Blue vertices and red edges for horizontal connections}: For each $(i,j)\in [k-1]\times [k]$ we add a set of vertices $H_{i,j}^{i+1,j}:= \big\{ \h_{i,j}^{i+1,j}(\ell)\ :\ \ell\in [N] \big\}$ shown in~\autoref{fig:main} using \textcolor[rgb]{0.00,0.00,1.00}{blue} color. We also add the following three sets of edges (shown in~\autoref{fig:main} using \textcolor[rgb]{1.00,0.00,0.00}{red} color):
          \begin{itemize}
            \item a directed path of $N-1$ edges given by $\Path(H_{i,j}^{i+1,j}) := \big\{ \h_{i,j}^{i+1,j}(\ell)\to \h_{i,j}^{i+1,j}(\ell+1)\ :\ \ell\in [N-1] \big\}$
              \item a directed perfect matching from $\Ri(G_{i,j})$ to $H_{i,j}^{i+1,j}$ given by\\ $\Matching\big( G_{i,j}, H_{i,j}^{i+1,j} \big):= \big\{ \w_{i,j}^{N,\ell} \to \h_{i,j}^{i+1,j}(\ell)\ :\ \ell\in [N] \big\}$
              \item a directed perfect matching from $H_{i,j}^{i+1,j}$ to $\Le(G_{i+1,j})$ given by\\ $\Matching\big( H_{i,j}^{i+1,j}, G_{i+1,j} \big):= \big\{ \h_{i,j}^{i+1,j}(\ell) \to \w_{i+1,j}^{1,\ell}\ :\ \ell\in [N] \big\}$
            \end{itemize}

\item \textbf{Blue vertices and red edges for vertical connections}: For each $(i,j)\in [k]\times [k-1]$ we add a set of vertices $V_{i,j}^{i,j+1}:= \big\{ \ve_{i,j}^{i,j+1}(\ell)\ :\ \ell\in [N] \big\}$ shown in~\autoref{fig:main} using \textcolor[rgb]{0.00,0.00,1.00}{blue} color. We also add the following three sets of edges (shown in~\autoref{fig:main} using \textcolor[rgb]{1.00,0.00,0.00}{red} color):
          \begin{itemize}
            \item a directed path of $N-1$ edges given by $\Path(V_{i,j}^{i,j+1}) := \big\{ \ve_{i,j}^{i,j+1}(\ell)\to \ve_{i,j}^{i,j+1}(\ell+1)\ :\ \ell\in [N-1] \big\}$
              \item a directed perfect matching from $\To(G_{i,j})$ to $V_{i,j}^{i,j+1}$ given by\\ $\Matching\big( G_{i,j}, V_{i,j}^{i,j+1} \big):= \big\{ \w_{i,j}^{\ell,N} \to \ve_{i,j}^{i,j+1}(\ell)\ :\ \ell\in [N] \big\}$
              \item a directed perfect matching from $V_{i,j}^{i,j+1}$ to $\Bo(G_{i,j+1})$ given by\\ $\Matching\big( V_{i,j}^{i,j+1}, G_{i,j+1} \big):= \big\{ \ve_{i,j}^{i,j+1}(\ell) \to \w_{i,j+1}^{\ell,1}\ :\ \ell\in [N] \big\}$
            \end{itemize}

\item \textbf{Green (terminal) vertices and magenta edges}: For each $i\in [k]$ we add the following four sets of (terminal) vertices (shown in~\autoref{fig:main} using \textcolor[rgb]{0.00,1.00,.00}{green} color)
\begin{equation}\label{eqn:A-B-C-D}
\begin{aligned}
        A := \big\{ a_i\ :\ i\in [k] \big\}\quad ;\quad B := \big\{ b_i\ :\ i\in [k] \big\} \\
        C := \big\{ c_i\ :\ i\in [k] \big\}\quad ;\quad D := \big\{ d_i\ :\ i\in [k] \big\}
\end{aligned}
\end{equation}

For each $i\in [k]$ we add the edges (shown in~\autoref{fig:main} using \textcolor[rgb]{1.00,0.00,1.00}{magenta} color)
\begin{equation}\label{eqn:source-sink-A-B}
\begin{aligned}
        \Source(A) := \big\{ a_i \to \w_{i,1}^{\ell,1}\ :\ \ell\in [N] \big\}\ ;\
        \Sink(B) := \big\{ \w_{i,N}^{\ell,N} \to b_{i}\ :\ \ell\in [N] \big\}
\end{aligned}
\end{equation}
For each $j\in [k]$ we add the edges (shown in~\autoref{fig:main} using \textcolor[rgb]{1.00,0.00,1.00}{magenta} color)
\begin{equation}\label{eqn:source-sink-C-D}
\begin{aligned}
        \Source(C) := \big\{ c_j \to \w_{1,j}^{1,\ell}\ :\ \ell\in [N] \big\}\ ;\
        \Sink(D) := \big\{ \w_{N,j}^{N,\ell} \to d_{j}\ :\ \ell\in [N] \big\}
\end{aligned}
\end{equation}
\end{enumerate}

This completes the construction of the graph $G_1$ (see ~\autoref{fig:main}).

\begin{claim}
\label{claim:G1-is-planar-and-dag}
\normalfont $G_1$ is a planar DAG
\end{claim}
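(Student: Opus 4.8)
The plan is to prove the two halves --- that $G_1$ is acyclic, and that $G_1$ is planar --- separately, in each case using the drawing of \autoref{fig:main} as a scaffold and then checking that the scaffold really is honest.

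\emph{Acyclicity.} I would exhibit a potential $\psi\colon V(G_1)\to\mathbb{R}$ that strictly increases along every arc; a finite digraph is a DAG precisely when such a $\psi$ exists, so this suffices. On every non-terminal vertex --- the black grid vertices $\w_{i,j}^{q,\ell}$ and the blue vertices $\h_{i,j}^{i+1,j}(\ell)$ and $\ve_{i,j}^{i,j+1}(\ell)$ --- I set $\psi(v)$ to be the horizontal coordinate plus the vertical coordinate of $v$ in \autoref{fig:main}. Every black grid arc $\w_{i,j}^{q,\ell}\to\w_{i,j}^{q,\ell+1}$ or $\w_{i,j}^{q,\ell}\to\w_{i,j}^{q+1,\ell}$ moves one step up or right and so raises $\psi$ by $1$; every red arc likewise moves strictly up or strictly right --- in a horizontal gadget the matchings $\Matching(G_{i,j},H_{i,j}^{i+1,j})$ and $\Matching(H_{i,j}^{i+1,j},G_{i+1,j})$ point right while $\Path(H_{i,j}^{i+1,j})$ points up, and in a vertical gadget the two matchings point up while $\Path(V_{i,j}^{i,j+1})$ points right --- so $\psi$ increases along every non-magenta arc. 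For the magenta arcs I use that every vertex of $A\cup C$ has only outgoing arcs, every vertex of $B\cup D$ has only incoming arcs, and no arc joins two terminals; hence I may extend $\psi$ by giving each vertex of $A\cup C$ a value below all the core values and each vertex of $B\cup D$ a value above all of them, and then $\psi$ strictly increases along every arc of $G_1$. Therefore $G_1$ is a DAG.

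\emph{Planarity.} I would check that \autoref{fig:main} is a genuine plane drawing by observing that it decomposes the plane, up to shared boundaries, into pieces each carrying a crossing-free part of $G_1$. Each grid $G_{i,j}$ is planar and lies inside an axis-parallel square $R_{i,j}$, and the $R_{i,j}$ form a $k\times k$ array of pairwise-disjoint squares. The horizontal gadget between $G_{i,j}$ and $G_{i+1,j}$ lies in the strip between $R_{i,j}$ and $R_{i+1,j}$: its blue vertices $\h_{i,j}^{i+1,j}(\ell)$ sit on one short vertical segment, and its $2N$ matching arcs are pairwise non-crossing horizontal segments running from $\Ri(G_{i,j})$ through $\h_{i,j}^{i+1,j}(\ell)$ to $\Le(G_{i+1,j})$ at height $\ell$; a vertical gadget sits analogously in the strip between vertically consecutive grids. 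These strips are pairwise disjoint and disjoint from the squares $R_{i,j}$, since horizontal-gadget strips occupy the gaps between horizontally consecutive super-columns (at the heights spanned by the grid rows) and vertical-gadget strips occupy the gaps between vertically consecutive super-rows (at the widths spanned by the grid columns). Finally the terminals $a_i,b_i,c_j,d_j$ lie on the outer face, below / above / to the left of / to the right of the array, and the $N$ magenta arcs at each terminal form a planar fan to $N$ consecutive boundary vertices of one grid, drawn in an otherwise empty region of the outer face; distinct fans lie in disjoint regions, the only thing to note being that at each of the four corner grids two fans reach the same corner vertex (for instance the fans at $a_1$ and $c_1$ both end at $\w_{1,1}^{1,1}$), which is a shared endpoint and not a crossing. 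Assembling these observations, no two arcs of $G_1$ cross, so $G_1$ is planar.

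\emph{Expected main obstacle.} Neither part is deep. For acyclicity the one point needing thought is that the coordinate-sum potential does \emph{not} by itself behave well on the magenta arcs --- an arc leaving some $a_i$ may travel up and to the \emph{left}, and so need not raise the coordinate sum --- which is why the argument treats $A\cup C$ as sources and $B\cup D$ as sinks and applies the coordinate potential only to the core. For planarity the only real work is the bookkeeping that the horizontal-gadget strips, the vertical-gadget strips, and the terminal fans are pairwise non-interfering, together with the small observation that two fans share a corner vertex at each of the four corner grids; this is routine rather than subtle.
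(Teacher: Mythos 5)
Your proposal is correct and is essentially a detailed elaboration of what the paper does: the paper's proof simply points at \autoref{fig:main} as a planar embedding and asserts that acyclicity "is easy to verify from the construction." Your coordinate-sum potential (with the sensible patch for the magenta arcs, since $A\cup C$ are sources and $B\cup D$ are sinks) and your region-by-region crossing check are exactly the kind of verification the paper leaves implicit, so this is the same approach carried out more carefully rather than a different route.
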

\begin{proof}
~\autoref{fig:main} gives a planar embedding of $G_1$. It is easy to verify from the construction of $G_1$ described at the start of~\autoref{subsubsec:construction-of-G1} (see also~\autoref{fig:main}) that $G_1$ is a DAG.
%
%
%
%
\end{proof}

\subsubsection{Obtaining the graph $G_2$ from $G_1$ via the splitting operation}
\label{subsubsec:construction-of-G2}

Observe (see~\autoref{fig:main}) that every (black) grid vertex in $G_1$ has in-degree two and out-degree two. Moreover, the two in-neighbors and two out-neighbors do not appear alternately. For each (black) grid vertex $z\in G_1$ we set up the notation:
\begin{definition}
\label{defn:lefty-right-topy-bottomy}
\normalfont
\textbf{(four neighbors of each grid vertex in $G_1$)} For each (black) grid vertex $\z\in G_1$ we define the following four vertices
\begin{itemize}
  \item $\lefty(\z)$ is the vertex to the left of $\z$ (as seen by the reader) which has an edge incoming into $\z$
  \item $\bottomy(\z)$ is the vertex below $\z$ (as seen by the reader) which has an edge incoming into $\z$
  \item $\righty(\z)$ is the vertex to the right of $\z$ (as seen by the reader) which has an edge outgoing from $\z$
  \item $\topy(\z)$ is the vertex above $\z$ (as seen by the reader) which has an edge outgoing from $\z$
\end{itemize}
\end{definition}


We now define the splitting operation which allows us to obtain the graph $G_2$ from the graph $G_1$ constructed in~\autoref{subsubsec:construction-of-G1}.

\begin{definition}
\normalfont
\textbf{(splitting operation)} For each $i,j\in [k]$ and each $q,\ell\in [N]$
\begin{itemize}
    \item If $(q,\ell)\notin S_{i,j}$, then we \textbf{split} the vertex $\w_{i,j}^{q,\ell}$ into two \textbf{distinct} vertices $\w_{i,j,\LB}^{q,\ell}$ and $\w_{i,j,\TR}^{q,\ell}$ and add the edge $\w_{i,j,\LB}^{q,\ell}\to \w_{i,j,\TR}^{q,\ell}$ (denoted by the dotted edge in~\autoref{fig:split}). The 4 edges (see~\autoref{defn:lefty-right-topy-bottomy}) incident on $\w_{i,j}^{q,\ell}$ are now changed as follows (see \autoref{fig:split}):
                \begin{itemize}
                \item Replace the edge $\lefty(\w_{i,j}^{q,\ell})\to \w_{i,j}^{q,\ell}$ by the edge $\lefty(\w_{i,j}^{q,\ell})\to \w_{i,j,\LB}^{q,\ell}$
                \item Replace the edge $\bottomy(\w_{i,j}^{q,\ell})\to \w_{i,j}^{q,\ell}$ by the edge $\bottomy(\w_{i,j}^{q,\ell})\to \w_{i,j,\LB}^{q,\ell}$
                \item Replace the edge $\w_{i,j}^{q,\ell}\to \righty(\w_{i,j}^{q,\ell})$ by the edge $\w_{i,j,\TR}^{q,\ell}\to \righty(\w_{i,j}^{q,\ell})$
                \item Replace the edge $\w_{i,j}^{q,\ell}\to \topy(\w_{i,j}^{q,\ell})$ by the edge $\w_{i,j,\TR}^{q,\ell}\to \topy(\w_{i,j}^{q,\ell})$
                \end{itemize}
    \item Otherwise, if $(q,\ell)\in S_{i,j}$ then the vertex $\w_{i,j}^{q,\ell}$ is \textbf{not split}, and we define $\w_{i,j,\LB}^{q,\ell}=\w_{i,j}^{q,\ell}=\w_{i,j,\TR}^{q,\ell}$.
        Note that the four edges (\autoref{defn:lefty-right-topy-bottomy}) incident on $\w_{i,j}^{q,\ell}$ are unchanged.
\end{itemize}
\label{defn:splitting-operation}
\end{definition}

\begin{remark}
\label{remark:issue-of-self-loops-for-splitting}
\normalfont
To avoid case distinctions in the forthcoming proof of correctness of the reduction, we will use the following non-standard notation: the edge $s\leadsto s$ \textbf{does not} represent a self-loop but rather is to be viewed as \emph{``just staying put"} at the vertex $s$. Note that this does not affect edge-disjointness.
%
\end{remark}

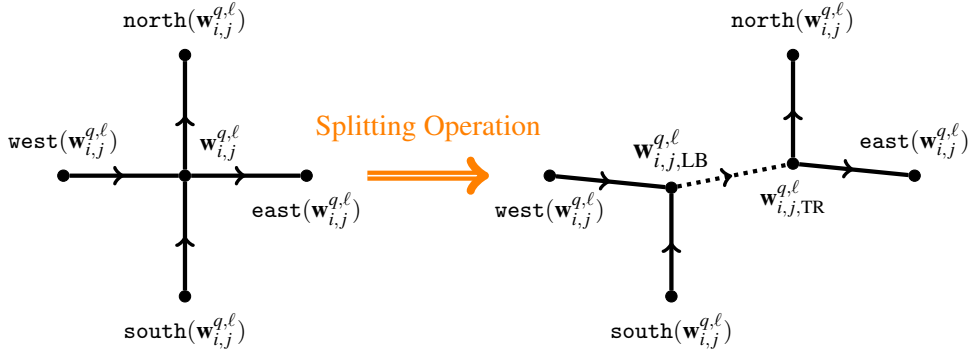
\begin{figure}[t!]
\centering
\begin{tikzpicture}[
vertex/.style={circle, draw=black, fill=black, text width=1.5mm, inner sep=0pt},
scale=0.8]
\node[vertex, label=above right:\footnotesize{$\w_{i,j}^{q,\ell}$}] (v) at (0,0) {} ;
\node[vertex, label=above:\footnotesize{$\lefty(\w_{i,j}^{q,\ell})$}] (l) at (-2,0) {};
\node[vertex, label=below:\footnotesize{$\righty(\w_{i,j}^{q,\ell})$}] (r) at (2,0) {};
\node[vertex, label=below:\footnotesize{$\bottomy(\w_{i,j}^{q,\ell})$}] (b) at (0,-2) {};
\node[vertex, label=above:\footnotesize{$\topy(\w_{i,j}^{q,\ell})$}] (t) at (0,2) {};
\draw[ultra thick, middlearrow={>}] (v) -- (t);
\draw[ultra thick, middlearrow={>}] (v) -- (r);
\draw[ultra thick, middlearrow={>}] (l) -- (v);
\draw[ultra thick, middlearrow={>}] (b) -- (v);

\draw[orange,double, ultra thick,->] (3,0) -- node[above=3mm, draw=none, fill=none, rectangle] {Splitting Operation} (5,0);

\node[vertex, label=below:\footnotesize{$\w_{i,j,\TR}^{q,\ell}$}] (vtr) at (10,0.2) {} ;
\node[vertex, label=above:$\w_{i,j,\LB}^{q,\ell}$] (vlb) at (8,-0.2) {} ;
\node[vertex, label=below:\footnotesize{$\lefty(\w_{i,j}^{q,\ell})$}] (l) at (6,0) {};
\node[vertex, label=above:\footnotesize{$\righty(\w_{i,j}^{q,\ell})$}] (r) at (12,0) {};
\node[vertex, label=below:\footnotesize{$\bottomy(\w_{i,j}^{q,\ell})$}] (b) at (8,-2) {};
\node[vertex, label=above:\footnotesize{$\topy(\w_{i,j}^{q,\ell})$}] (t) at (10,2) {};
\draw[ultra thick, middlearrow={>}] (vtr) -- (t);
\draw[ultra thick, middlearrow={>}] (vtr) -- (r);
\draw[ultra thick, middlearrow={>}] (l) -- (vlb);
\draw[ultra thick, middlearrow={>}] (b) -- (vlb);
\draw[dotted,ultra thick,middlearrow={>}] (vlb) -- (vtr);
\end{tikzpicture}

\caption{The splitting operation for the vertex $\w_{i,j}^{q,\ell}$ when
$(q,\ell)\notin S_{i,j}$. The idea behind this splitting is if we want edge-disjoint paths then we can go \textbf{either} left-to-right or bottom-to-top but not in \textbf{both} directions. On the other hand, if $(q,\ell)\in S_{i,j}$ then the picture on the right-hand side (after the splitting operation) would look exactly like that on the left-hand side.
\label{fig:split}
}
\end{figure}

We are now ready to define the graph $G_2$ and the set $\mathcal{T}$ of terminal pairs:

\begin{definition}
\normalfont The graph $G_2$ is obtained by applying the splitting operation (~\autoref{defn:splitting-operation}) to each (black) grid vertex of $G_1$, i.e., the set of vertices given by $\bigcup_{1\leq i,j\leq k} V(G_{i,j})$. The set of terminal pairs is $\mathcal{T}:= \big\{(a_i, b_i) : i\in [k] \big\}\cup \big\{(c_j, d_j) : j\in [k] \big\}$
\label{defn:G2}
\end{definition}


Note that in $G_2$ we have
\begin{itemize}
  \item All vertices in $G_{2}$ except $A\cup C$ have out-degree at most $2$
  \item All vertices in $G_{2}$ except $B\cup D$ have in-degree at most $2$
\end{itemize}
We will later show (see last paragraph in the proof of~\autoref{thm:main-result}) how to edit $G_2$ such that each vertex has both in-degree and out-degree at most $2$. The next claim shows that $G_2$ is also both planar and acyclic (like $G_1$).

\begin{claim}
\label{claim:G2-is-planar-and-dag}
\normalfont
$G_2$ is a planar DAG
\end{claim}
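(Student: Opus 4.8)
The plan is to derive both properties of $G_2$ from the corresponding properties of $G_1$ (\autoref{claim:G1-is-planar-and-dag}) together with the observation that the splitting operation (\autoref{defn:splitting-operation}) is a purely \emph{local} modification that preserves acyclicity and planarity. The starting point is that every edge of $G_2$ is of one of two kinds: a \emph{dotted} edge $\w_{i,j,\LB}^{q,\ell}\to\w_{i,j,\TR}^{q,\ell}$ coming from a split grid vertex, or the image of an edge $u\to v$ of $G_1$, where (by \autoref{defn:splitting-operation}) the endpoint $u$ is replaced by its $\TR$-copy when $u$ is a split grid vertex and the endpoint $v$ is replaced by its $\LB$-copy when $v$ is a split grid vertex. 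In particular the dotted edges form a matching on $V(G_2)$, and no new edges other than the dotted ones are created.

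First I would show $G_2$ is a DAG. Fix a topological ordering $\prec$ of $G_1$, which exists by \autoref{claim:G1-is-planar-and-dag}, and refine it to an ordering $\prec'$ of $V(G_2)$ by scanning the vertices of $G_1$ in $\prec$-order and outputting, for a vertex that is not a split grid vertex, the vertex itself, and for a split grid vertex $\w$, the two copies $\w_{\LB},\w_{\TR}$ consecutively in that order. Then $\prec'$ is a topological ordering of $G_2$, checked edge by edge: a dotted edge $\w_{\LB}\to\w_{\TR}$ is oriented forward because its endpoints are output consecutively; an inherited edge $u\to v$ of $G_1$ becomes $u'\to v'$ where $u'\in\{u,u_{\TR}\}$ is output at the $\prec$-slot of $u$ and $v'\in\{v,v_{\LB}\}$ is output at the $\prec$-slot of $v$, so $u\prec v$ forces $u'\prec' v'$. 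Hence $G_2$ has no directed cycle. (Alternatively: contracting all dotted edges recovers $G_1$; a directed cycle of $G_2$ must use at least one inherited edge since the dotted edges form a matching, so it would project to a closed directed walk of positive length in $G_1$, contradicting acyclicity.)

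Next I would show $G_2$ is planar. Take the planar embedding of $G_1$ given in \autoref{fig:main} and inspect the rotation at an arbitrary split grid vertex $\w=\w_{i,j}^{q,\ell}$. By construction the two edges into $\w$ come from $\lefty(\w)$ and $\bottomy(\w)$ and the two edges out of $\w$ go to $\righty(\w)$ and $\topy(\w)$ (see \autoref{defn:lefty-right-topy-bottomy}), and in the embedding of \autoref{fig:main} these appear around $\w$ in cyclic order $\lefty(\w),\topy(\w),\righty(\w),\bottomy(\w)$, so the two incoming edges are consecutive and the two outgoing edges are consecutive. Therefore inside a small disk around $\w$, disjoint from the rest of the drawing, we may delete $\w$ and insert the local gadget of \autoref{fig:split}: the vertex $\w_{\LB}$ reconnected to the (contiguous) edges from $\lefty(\w),\bottomy(\w)$, the vertex $\w_{\TR}$ from which the (contiguous) edges to $\righty(\w),\topy(\w)$ emanate, and the dotted edge $\w_{\LB}\to\w_{\TR}$ drawn inside the disk, all without introducing a crossing. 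Performing this replacement simultaneously at all split grid vertices (the disks being pairwise disjoint) yields a planar embedding of $G_2$.

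The only slightly delicate point, which I expect to be the main (if routine) obstacle, is the rotation-system check in the planarity argument: one must verify that the ``incoming edges form a contiguous arc, outgoing edges form a contiguous arc'' pattern really does hold at \emph{every} grid vertex of $G_1$, not just the generic interior ones — in particular at the four corner vertices of each block $G_{i,j}$, at vertices on the grid boundaries incident to the red matching edges, and at the corners of the whole construction incident to the magenta edges from $A\cup C$ (for instance $\w_{1,1}^{1,1}$, which receives edges from both $a_1$ and $c_1$). Inspecting \autoref{fig:main} case by case (interior vertices, the four sides, the four corners, terminal-incident corners) confirms the pattern in every case, since every horizontal grid, matching, or source edge is drawn left-to-right and every vertical one bottom-to-top. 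This completes the proof.
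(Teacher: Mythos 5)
Your proof takes essentially the same approach as the paper: for acyclicity, refining the topological order of $G_1$ by replacing each split vertex $\w$ with $\w_{\LB}$ immediately followed by $\w_{\TR}$; for planarity, observing that the splitting operation is a local modification compatible with the embedding of~\autoref{fig:main}. The paper dismisses the planarity step with ``it is easy to see (\autoref{fig:split})'', whereas you spell out the rotation-system check (incoming edges consecutive, outgoing edges consecutive around every grid vertex) and also offer an alternative edge-contraction argument for acyclicity, but these are elaborations of the same idea rather than a different route.
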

\begin{proof}
In~\autoref{claim:G1-is-planar-and-dag}, we have shown that $G_1$ is a planar DAG. By~\autoref{defn:G2}, $G_2$ is obtained from $G_1$ by applying the splitting operation (\autoref{defn:splitting-operation}) on every (black) grid vertex, i.e., every vertex from the set $\bigcup_{1\leq i,j\leq k} V(G_{i,j})$.

By~\autoref{defn:lefty-right-topy-bottomy}, every vertex of $G_1$ that is split  has exactly two in-neighbors and two out-neighbors in $G_1$. Hence, it is easy to see (\autoref{fig:split}) that the splitting operation (\autoref{defn:splitting-operation}) does not destroy planarity when we construct $G_2$ from $G_1$. Since $G_1$ is a DAG, replacing each split (black) grid vertex $\w$ in $G_1$ by $\w_{\LB}$ followed by $\w_{\TR}$ in the topological order of $G_1$ gives a topological order for $G_2$. Hence, $G_2$ is a planar DAG.
\end{proof}

We now set up notation for the grids in $G_2$:

\begin{definition}
\label{defn:G-ij-split}
\normalfont For each $i,j \in [k]$, we define $G_{i,j}^{\splitt}$ to be the graph obtained by applying the splitting operation (\autoref{defn:splitting-operation}) to each vertex of $G_{i,j}$. For each $i,j\in [k]$ and each $q,\ell\in [N]$ we define $\splitt(\w_{i,j}^{q,\ell}):= \big\{ \w_{i,j,\LB}^{q,\ell}, \w_{i,j,\TR}^{q,\ell} \big\}$.
\end{definition}


\subsection{ Solution for \edp $\Rightarrow$ Solution for \gtleq
}
\label{subsec:hard-direction}

In this section, we show that if the instance $(G_2, \mathcal{T})$ of \edp has a solution then the instance $(k, N, \mathcal{S})$ of \gtleq also has a solution.

Suppose that the instance $(G_2,\mathcal{T})$ of \edp has a solution, i.e., there is a collection of $2k$ pairwise edge-disjoint paths $\big\{ P_1, P_2, \ldots, P_k,$ $Q_1, Q_2, \ldots, Q_k \big\}$ in $G_2$ such that
\begin{equation}\label{eqn:paths-Pi-Qj-disjoint}
\begin{aligned}
        &P_{i}\ \text{is an}\ a_i \leadsto b_i\ \text{path}\ \forall\ i\in [k]\\ 
        &Q_{j}\ \text{is an}\ c_j \leadsto d_j\ \text{path}\ \forall\ j\in [k]
\end{aligned}
\end{equation}
%

\noindent To streamline the arguments of this section, we define the following subsets of vertices of $G_2$:

\begin{definition}
\label{defn:horizontal-vertical-sets}
\textbf{(horizontal \& vertical levels)}\\
For each $j\in [k]$, we define the following set of vertices:
$$ \Horizontal(j) = \{ c_j, d_j \} \cup \Big( \bigcup_{i=1}^{k} V(G_{i,j}^{\splitt})\Big) \cup \Big( \bigcup_{i=1}^{k-1} H_{i,j}^{i+1,j} \Big) $$
For each $i\in [k]$, we define the following set of vertices:
$$ \Vertical(i) = \{ a_i, b_i \} \cup \Big( \bigcup_{j=1}^{k} V(G_{i,j}^{\splitt})\Big) \cup \Big( \bigcup_{j=1}^{k-1} V_{i,j}^{i,j+1} \Big) $$
\end{definition}

From~\autoref{defn:horizontal-vertical-sets}, it is easy to verify that $\Vertical(i)\cap \Vertical(i') = \emptyset = \Horizontal(i)\cap \Horizontal(i')$ for every $1\leq i\neq i'\leq k$.

\begin{definition}
\textbf{(boundary vertices in $G_2$)}
For each $1\leq i,j\leq k$ we define the set of boundary vertices of the grid $G_{i,j}^{\splitt}$ in the graph $G_2$ as follows:
\begin{equation}\label{eqn:left-right-top-bottom-G1}
 \begin{aligned}
        \Le(G_{i,j}^{\splitt}) := \big\{ \w_{i,j,\LB}^{1,\ell}\ :\ \ell\in [N]  \big\}\ ;\
        \Ri(G_{i,j}^{\splitt}) := \big\{ \w_{i,j,\TR}^{N,\ell}\ :\ \ell\in [N]  \big\} \\
        \To(G_{i,j}^{\splitt}) := \big\{ \w_{i,j,\TR}^{\ell,N}\ :\ \ell\in [N]  \big\}\ ;\
        \Bo(G_{i,j}^{\splitt}) := \big\{ \w_{i,j,\LB}^{\ell,1}\ :\ \ell\in [N]  \big\}
       \end{aligned}
\end{equation}
\label{defn:left-right-top-bottom-sets-in-G2}
\end{definition}

\begin{lemma}
\normalfont
For each $i\in [k]$ the path $P_i$ satisfies the following two structural properties:
\begin{itemize}
  \item every edge of the path $P_i$ has both end-points in $\Vertical(i)$
  \item $P_i$ contains an $\Bo(G_{i,j}^{\splitt}) \leadsto \To(G_{i,j}^{\splitt})$ path for each $j\in [k]$.
\end{itemize}
\label{lem:Pi-structural-characterizations}
\end{lemma}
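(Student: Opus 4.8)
The plan is to read both properties off the combinatorial skeleton of $G_2$ alone; neither edge-disjointness nor the paths $Q_j$ are needed. Everything rests on two local facts about $G_2$, extracted from the construction of \autoref{subsubsec:construction-of-G1} and the splitting operation of \autoref{defn:splitting-operation}. \emph{(A)} The sets $H_{i,j}^{i+1,j}$ are the only ``bridges'' between the $i$-th and the $(i+1)$-th column of the grid-of-grids, and every edge incident to them is oriented from column $i$ towards column $i+1$. \emph{(B)} Inside a single column the sets $V_{i,j}^{i,j+1}$ are the only bridges between the consecutive grids $G_{i,j}^{\splitt}$ and $G_{i,j+1}^{\splitt}$, their edges point upwards, and --- because the splitting operation sends every $\lefty$- and $\bottomy$-edge into the $\LB$-copy and every $\righty$- and $\topy$-edge out of the $\TR$-copy --- every edge with both endpoints inside a single grid $G_{i,j}^{\splitt}$ still points rightwards or upwards. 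In particular, the only vertices of $G_{i,j}^{\splitt}$ carrying an edge to a vertex outside the grid lie in $\Ri(G_{i,j}^{\splitt})\cup\To(G_{i,j}^{\splitt})$, and the only vertices receiving an edge from outside the grid lie in $\Le(G_{i,j}^{\splitt})\cup\Bo(G_{i,j}^{\splitt})$.

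For the first bullet I would introduce a ``column potential'' $\phi\colon V(G_2)\to\tfrac{1}{2}\mathbb{Z}$ by setting $\phi(v)=i$ for every $v\in V(G_{i,j}^{\splitt})\cup V_{i,j}^{i,j+1}\cup\{a_i,b_i\}$, setting $\phi(\h_{i,j}^{i+1,j}(\ell))=i+\tfrac{1}{2}$, and setting $\phi(c_j)=\tfrac{1}{2}$ and $\phi(d_j)=k+\tfrac{1}{2}$. Checking each type of edge (this is exactly fact (A) made quantitative) shows that $\phi$ never decreases along an edge of $G_2$: it is constant, except that it strictly increases along the matchings between $G_{i,j}$ and $H_{i,j}^{i+1,j}$, between $H_{i,j}^{i+1,j}$ and $G_{i+1,j}$, and along the $\Source(C)$ and $\Sink(D)$ edges. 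Since $P_i$ runs from $a_i$ to $b_i$ --- both of potential $i$ --- the potential must be constantly $i$ along $P_i$; as $\phi^{-1}(i)$ is precisely $\Vertical(i)$, every vertex, hence every edge, of $P_i$ lies in $\Vertical(i)$.

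For the second bullet, fix $j\in[k]$ and look at the vertices of $P_i$ that lie in $V(G_{i,j}^{\splitt})$. A parallel ``height potential'' $\psi$ on $\Vertical(i)$ --- value $j$ on $V(G_{i,j}^{\splitt})$, value $j+\tfrac{1}{2}$ on $V_{i,j}^{i,j+1}$, value $\tfrac{1}{2}$ on $a_i$, value $k+\tfrac{1}{2}$ on $b_i$ --- is, by fact (B), non-decreasing along the edges of $P_i$ and changes by at most $\tfrac{1}{2}$ per edge; hence $P_i$ attains every value $\tfrac{1}{2},1,\tfrac{3}{2},\dots,k+\tfrac{1}{2}$, in particular $j$, so $P_i$ meets $G_{i,j}^{\splitt}$, and since $\psi$ has grown past $j$ once $P_i$ leaves $G_{i,j}^{\splitt}$, the vertices of $P_i$ in $G_{i,j}^{\splitt}$ form one contiguous stretch of $P_i$. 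It remains to identify the two ends of that stretch, and the first bullet does this: within $\Vertical(i)$ the only edges entering $G_{i,j}^{\splitt}$ from outside land in $\Bo(G_{i,j}^{\splitt})$ (they come from $V_{i,j-1}^{i,j}$, or from $a_i$ when $j=1$; the edges into $\Le(G_{i,j}^{\splitt})$ come from $H_{i-1,j}^{i,j}$, or from $c_j$ when $i=1$, both outside $\Vertical(i)$), and dually the only edges leaving $G_{i,j}^{\splitt}$ for the rest of $\Vertical(i)$ emanate from $\To(G_{i,j}^{\splitt})$ (they go to $V_{i,j}^{i,j+1}$, or to $b_i$ when $j=k$; the edges from $\Ri(G_{i,j}^{\splitt})$ go to $H_{i,j}^{i+1,j}$, or to $d_j$ when $i=k$, both outside $\Vertical(i)$). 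Hence this stretch of $P_i$ starts in $\Bo(G_{i,j}^{\splitt})$ and ends in $\To(G_{i,j}^{\splitt})$, and being a contiguous sub-path of $P_i$ contained in $G_{i,j}^{\splitt}$ it is the required $\Bo(G_{i,j}^{\splitt})\leadsto\To(G_{i,j}^{\splitt})$ path.

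The only step that needs genuine care is verifying facts (A) and (B), and within (B) the claim that the splitting operation of \autoref{defn:splitting-operation} opens no new way into or out of a grid within its column: one checks, distinguishing according to whether each grid vertex $\w_{i,j}^{q,\ell}$ is split, that its incoming $\lefty$- and $\bottomy$-edges keep feeding the $\LB$-copy while its outgoing $\righty$- and $\topy$-edges keep leaving the $\TR$-copy, so that every $G_{i,j}^{\splitt}$ stays a DAG all of whose edges point rightwards or upwards (see \autoref{claim:G2-is-planar-and-dag}); the only subtlety is that on the four boundaries of a grid these roles are played by the incident matching and terminal edges rather than by interior grid edges. Once this bookkeeping is in place, both bullets are routine monotone-potential arguments and no deeper obstacle arises.
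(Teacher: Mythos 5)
Your proof is correct, and it reaches the same two structural conclusions as the paper but packages the argument differently. The paper's proof of the first bullet is a case analysis on neighborhood sets: it identifies $N^{+}_{G_2}(\Vertical(i^*))$ and $N^{-}_{G_2}(\Vertical(i^*))$ explicitly, observes that once $P_{i^*}$ steps into $N^{+}(\Vertical(i^*)) = \bigcup_j H_{i^*,j}^{i^*+1,j}$ it can only continue into $\Vertical(i^*+1)$ and beyond and hence never return to $b_{i^*}$, and argues symmetrically for $N^{-}$. Your ``column potential'' $\phi$ condenses exactly this one-way-escape reasoning into a single monotone invariant: $\phi$ is non-decreasing on every edge, $\phi(a_i)=\phi(b_i)=i$, and $\phi^{-1}(i)=\Vertical(i)$, so $P_i$ is trapped at level $i$. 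Likewise, the paper's proof of the second bullet lists the in-/out-neighborhood relations of $G_{i^*,j}^{\splitt}$ and $V_{i^*,j}^{i^*,j+1}$ inside $G_2[\Vertical(i^*)]$ and reads off the bottom-to-top traversal; your ``height potential'' $\psi$ (non-decreasing, half-integer-valued, step size $\le 1/2$) gives the same conclusion via a discrete intermediate-value argument together with the observation that the only ways into a grid from within $\Vertical(i)$ land in $\Bo$ and the only ways out lead from $\To$. The potential-function formulation is a bit more compact and makes the monotonicity that drives the argument explicit in one place, whereas the paper's phrasing is more local and self-contained at each step; both routes are sound and rely on the same underlying DAG structure. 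One minor remark: you should state explicitly (as you implicitly use) that the first vertex of $P_i$ at height $j$ cannot be the start of $P_i$ (since $\psi(a_i)=1/2<j$) and the last one cannot be the end (since $\psi(b_i)=k+1/2>j$), so that the bounding edges of the stretch actually exist.
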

\begin{proof}
For this proof, define $H_{0,j}^{1,j}:= \{c_j\}$ and $H_{k,j}^{k+1,j}:= \{d_j\}$ for each $j\in [k]$.

Fix any $i^*\in [k]$. Note that $P_{i^*}$ is an $a_{i^*}\leadsto b_{i^*}$ path and hence starts and ends at a vertex in $\Vertical(i^*)$. We now prove the first part of lemma by showing two claims which state that $P_{i^*}$ cannot contain any vertex of $N_{G_2}^{+}\big( \Vertical(i^*) \big)$ and $N_{G_2}^{-}\big( \Vertical(i^*) \big)$ respectively.
    \begin{claim}
    \normalfont
    $P_{i^*}$ does not contain any vertex of $N_{G_2}^{+}\big( \Vertical(i^*) \big)$.
    \end{claim}
    \begin{proof}
     The structure of $G_2$ implies that
        \begin{itemize}
          \item $N_{G_2}^{+}\big( \Vertical(i) \big) = \bigcup_{j=1}^{k} H_{i,j}^{i+1,j}$ for each $i\in [k]$
          \item $N_{G_2}^{+}\big( \bigcup_{j=1}^{k} H_{i,j}^{i+1,j} \big) \subseteq \Vertical(i+1)$ for each $0\leq i\leq k-1$
          \item $N_{G_2}^{+}\big( \bigcup_{j=1}^{k} H_{k,j}^{k+1,j} \big) =\emptyset$ since each vertex of $D$ is a sink in $G_2$
        \end{itemize}
         Hence, if $P_{i^*}$ contains a vertex from $N_{G_2}^{+}\big( \Vertical(i^*) \big)$ then it cannot ever return back to $\Vertical(i^*)$ which contradicts the fact that the last vertex of $P_{i^*}$ is $b_{i^*}\in \Vertical(i^*)$.
    \end{proof}

    \begin{claim}
    \normalfont
    $P_{i^*}$ does not contain any vertex of $N_{G_2}^{-}\big( \Vertical(i^*) \big)$.
    \end{claim}
    \begin{proof}
    The structure of $G_2$ implies that
        \begin{itemize}
          \item $N_{G_2}^{-}\big( \Vertical(i) \big) = \bigcup_{j=1}^{k} H_{i-1,j}^{i,j}$ for each $i\in [k]$
          \item $N_{G_2}^{-}\big( \bigcup_{j=1}^{k} H_{i,j}^{i+1,j} \big) \subseteq \Vertical(i)$ for each $1\leq i\leq k$
          \item $N_{G_2}^{-}\big( \bigcup_{j=1}^{k} H_{0,j}^{1,j} \big) =\emptyset$ since each vertex of $C$ is a source in $G_2$
        \end{itemize}
        Hence, if $P_{i^*}$ contains a vertex from $N_{G_2}^{-}\big( \Vertical(i^*) \big)$ then $P_{i^*}$ cannot have started at a vertex of $\Vertical(i^*)$ which contradicts the fact that the first vertex of $P_{i^*}$ is $a_{i^*}\in \Vertical(i^*)$.
    \end{proof}
    This concludes the proof of the first part of the lemma. We now show the second part of the lemma. We define $V_{i^*,0}^{i^*,1}:=\{a_{i^*}\}$ and $V_{i^*,k}^{i^*,k+1}:=\{b_{i^*}\}$.
    The structure of $G_2$ implies that
        \begin{itemize}
        \item $N^{+}_{G_{2}[\Vertical(i^*)]}\big( G_{i^*,j}^{\splitt}\big) = V_{i^*,j}^{i^*,j+1}$ and $N^{-}_{G_{2}[\Vertical(i^*)]}\big( G_{i^*,j}^{\splitt}\big) = V_{i^*,j-1}^{i^*,j}$ for each $j\in [k]$

        \item $N^{+}_{G_{2}[\Vertical(i^*)]}\big( V_{i^*,j}^{i^*,j+1} \big) = \Bo\big( G_{i^*,j+1}^{\splitt} \big)$ for each $0\leq j\leq k-1$

        \item $N^{-}_{G_{2}[\Vertical(i^*)]}\big( V_{i^*,j}^{i^*,j+1} \big) = \To\big( G_{i^*,j}^{\splitt} \big)$ for each $1\leq j\leq k$
        \end{itemize}
        These three relations, combined with the first part of the lemma which states that $P_{i*}$ lies within $G_{2}[\Vertical(i^*)]$, implies that $P_{i^*}$ contains an $\Bo(G_{i^*,j}^{\splitt}) \leadsto \To(G_{i^*,j}^{\splitt})$ path for each $j\in [k]$.
%
This concludes the proof of~\autoref{lem:Pi-structural-characterizations}.
\end{proof}

The proof of the next lemma is very similar to that of~\autoref{lem:Pi-structural-characterizations}, and we skip repeating the details.

\begin{lemma}
\normalfont
For each $j\in [k]$ the path $Q_j$ satisfies the following two structural properties:
\begin{itemize}
  \item every edge of the path $Q_j$ has both end-points in $\Horizontal(j)$
  \item $Q_j$ contains an $\Le(G_{i,j}^{\splitt}) \leadsto \Ri(G_{i,j}^{\splitt})$ path for each $i\in [k]$
\end{itemize}
\label{lem:Qj-structural-characterizations}
\end{lemma}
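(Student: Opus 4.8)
The plan is to mirror the proof of \autoref{lem:Pi-structural-characterizations} with the roles of rows and columns (equivalently, the indices $i$ and $j$) interchanged, exploiting the symmetry of the construction of $G_1$ (and hence $G_2$) under reflection across the main diagonal. Concretely, I would first set up the boundary conventions analogous to the ones used before: define $V_{i,0}^{i,1}:=\{a_i\}$ and $V_{i,k}^{i,k+1}:=\{b_i\}$ for each $i\in[k]$, since the green vertices $A$ and $B$ play for the vertical direction the role that $C$ and $D$ play for the horizontal direction. Fix an arbitrary $j^*\in[k]$ and recall that $Q_{j^*}$ is a $c_{j^*}\leadsto d_{j^*}$ path, so it starts and ends in $\Horizontal(j^*)$.

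For the first bullet, I would establish the two claims that $Q_{j^*}$ contains no vertex of $N_{G_2}^{+}\big(\Horizontal(j^*)\big)$ and no vertex of $N_{G_2}^{-}\big(\Horizontal(j^*)\big)$. This uses the structural facts, read off directly from the construction, that $N_{G_2}^{+}\big(\Horizontal(j)\big)=\bigcup_{i=1}^{k-1} V_{i,j}^{i,j+1}$ with all out-neighbours of that set lying in $\Horizontal(j+1)$ (and empty out-neighbourhood when $j=k$, as every vertex of $B$ is a sink), and dually $N_{G_2}^{-}\big(\Horizontal(j)\big)=\bigcup_{i=1}^{k-1} V_{i,j-1}^{i,j}$ with all in-neighbours of that set in $\Horizontal(j-1)$ (and empty when $j=1$, as every vertex of $A$ is a source). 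Since distinct $\Horizontal(j)$'s are vertex-disjoint, entering $\Horizontal(j^*\pm1)$ is a one-way door: once $Q_{j^*}$ leaves level $j^*$ it can never return, contradicting that both endpoints $c_{j^*},d_{j^*}$ lie in $\Horizontal(j^*)$. Hence every edge of $Q_{j^*}$ has both endpoints in $\Horizontal(j^*)$, i.e.\ $Q_{j^*}$ lives inside $G_2[\Horizontal(j^*)]$.

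For the second bullet, I would argue within $G_2[\Horizontal(j^*)]$ using the chain relations $N^{+}_{G_2[\Horizontal(j^*)]}\big(G_{i,j^*}^{\splitt}\big)=H_{i,j^*}^{i+1,j^*}$ and $N^{-}_{G_2[\Horizontal(j^*)]}\big(G_{i,j^*}^{\splitt}\big)=H_{i-1,j^*}^{i,j^*}$ for each $i\in[k]$, together with $N^{+}_{G_2[\Horizontal(j^*)]}\big(H_{i,j^*}^{i+1,j^*}\big)=\Le\big(G_{i+1,j^*}^{\splitt}\big)$ for $0\le i\le k-1$ and $N^{-}_{G_2[\Horizontal(j^*)]}\big(H_{i,j^*}^{i+1,j^*}\big)=\Ri\big(G_{i,j^*}^{\splitt}\big)$ for $1\le i\le k$ (with the convention $H_{0,j^*}^{1,j^*}=\{c_{j^*}\}$, $H_{k,j^*}^{k+1,j^*}=\{d_{j^*}\}$). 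Walking along $Q_{j^*}$ from $c_{j^*}$ to $d_{j^*}$, these relations force it to traverse the grids $G_{1,j^*}^{\splitt},\dots,G_{k,j^*}^{\splitt}$ in order, and in particular to contain a $\Le(G_{i,j^*}^{\splitt})\leadsto\Ri(G_{i,j^*}^{\splitt})$ subpath for every $i\in[k]$, which is exactly the claim. Since the whole argument is a verbatim transposition of the already-verified proof of \autoref{lem:Pi-structural-characterizations}, there is no genuinely new obstacle; the only thing to be careful about is bookkeeping — making sure the $\Le/\Ri$ boundary sets from \autoref{defn:left-right-top-bottom-sets-in-G2} (which use $\w_{\LB}^{1,\ell}$ and $\w_{\TR}^{N,\ell}$ after splitting) match up correctly with the horizontal connector matchings $\Matching(G_{i,j},H_{i,j}^{i+1,j})$ and $\Matching(H_{i,j}^{i+1,j},G_{i+1,j})$, so that the split vertices do not break the chain. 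Given the symmetry, I would simply state that the proof is obtained from that of \autoref{lem:Pi-structural-characterizations} by swapping the roles of $(i,j)$, $A\leftrightarrow C$, $B\leftrightarrow D$, $\Vertical\leftrightarrow\Horizontal$, $V_{i,j}^{i,j+1}\leftrightarrow H_{i,j}^{i+1,j}$, and $\Bo/\To\leftrightarrow\Le/\Ri$, and omit the repeated details.
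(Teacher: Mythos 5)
Your proposal matches the paper's approach exactly: the paper itself states that the proof of this lemma is ``very similar'' to that of \autoref{lem:Pi-structural-characterizations} under precisely the row/column swap you describe, and omits the details. One small index slip to fix: in your first-bullet neighbourhood identities the unions must run over all $k$ columns, i.e.\ $N_{G_2}^{+}\big(\Horizontal(j)\big)=\bigcup_{i=1}^{k} V_{i,j}^{i,j+1}$ and $N_{G_2}^{-}\big(\Horizontal(j)\big)=\bigcup_{i=1}^{k} V_{i,j-1}^{i,j}$, not $\bigcup_{i=1}^{k-1}$, mirroring the $\bigcup_{j=1}^{k}$ appearing in the paper's proof of \autoref{lem:Pi-structural-characterizations}.
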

%
%
\begin{lemma}
\normalfont
For any $(i,j)\in [k]\times [k]$, let $P', Q'$ be any $\Bo(G_{i,j}^{\splitt}) \leadsto \To(G_{i,j}^{\splitt})$, $\Le(G_{i,j}^{\splitt}) \leadsto \Ri(G_{i,j}^{\splitt})$ paths in $G_2$ respectively. If $P'$ and $Q'$ are edge-disjoint then there exists $(\mu,\delta)\in S_{i,j}$ such that the vertex $\w_{i,j,\LB}^{\mu,\delta}=\w_{i,j}^{\mu,\delta}=\w_{i,j,\TR}^{\mu,\delta}=$ belongs to both $P'$ and $Q'$
\label{lem:any-left-to-right-path-must-intersect-bottom-to-top-path}
\end{lemma}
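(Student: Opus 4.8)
The plan is to localize the whole question to the single sub-grid $G_{i,j}^{\splitt}$, project it onto the plain $N\times N$ grid $G_{i,j}$, extract a common grid position from an elementary crossing fact about monotone lattice paths, and finally exploit the splitting gadget together with edge-disjointness to force that position to be unsplit (i.e.\ to lie in $S_{i,j}$).

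\emph{Localization and projection.} First I would check that $P'$ and $Q'$ stay entirely inside $G_{i,j}^{\splitt}$. By \autoref{claim:G2-is-planar-and-dag} the graph $G_2$ is a DAG, and the only edges of $G_2$ leaving $V(G_{i,j}^{\splitt})$ run (via matchings) into $H_{i,j}^{i+1,j}$, into $V_{i,j}^{i,j+1}$, or into a sink terminal, and from all of those vertices $V(G_{i,j}^{\splitt})$ is no longer reachable; since $P'$ starts in $\Bo(G_{i,j}^{\splitt})$ and ends in $\To(G_{i,j}^{\splitt})$, it can never leave, and likewise for $Q'$. Now let $\pi$ identify both $\w_{i,j,\LB}^{q,\ell}$ and $\w_{i,j,\TR}^{q,\ell}$ with the position $(q,\ell)$. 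Every edge of $G_{i,j}^{\splitt}$ is sent by $\pi$ either to a ``stay'' step (a dotted edge $\w_{i,j,\LB}^{q,\ell}\to\w_{i,j,\TR}^{q,\ell}$ at a split position) or to a rightward or an upward unit step of $G_{i,j}$, and at a split position the only out-edge of $\w_{i,j,\LB}^{q,\ell}$ reaches $\w_{i,j,\TR}^{q,\ell}$, so along the simple path $P'$ the vertices mapping to one position form a single consecutive block. Hence, after collapsing stay steps, $\pi(P')$ is a simple monotone (coordinatewise non-decreasing) lattice path of $G_{i,j}$ whose first vertex lies in row $1$ (the $\pi$-image of $\Bo(G_{i,j}^{\splitt})$) and whose last vertex lies in row $N$ (the $\pi$-image of $\To(G_{i,j}^{\splitt})$); in particular it occupies every row. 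Symmetrically $\pi(Q')$ is a simple monotone lattice path occupying every column, with endpoints in columns $1$ and $N$.

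\emph{Crossing lemma.} The heart of the argument is the purely combinatorial claim that $\pi(P')$ and $\pi(Q')$ must share a vertex of $G_{i,j}$. I would prove it by a sign-change argument: let $[p_0,p_1]$ be the (contiguous) block of columns occupied by $\pi(P')$; in each column $c\in[p_0,p_1]$ both paths occupy a contiguous block of rows, say with bottom rows $B_P(c),B_Q(c)$ and top rows $T_P(c),T_Q(c)$, and monotonicity gives $B_P(c+1)=T_P(c)$ and $B_Q(c+1)=T_Q(c)$. Assume for contradiction that these two blocks are disjoint in every column of $[p_0,p_1]$. At $c=p_0$ we have $B_P(p_0)=1$, which rules out ``$\pi(P')$ above $\pi(Q')$'' and forces $T_P(p_0)<B_Q(p_0)$; at $c=p_1$ we have $T_P(p_1)=N$, which forces $T_Q(p_1)<B_P(p_1)$. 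Hence there is a column $c$ with $T_P(c)<B_Q(c)$ and $T_Q(c+1)<B_P(c+1)$; substituting $B_P(c+1)=T_P(c)$ and $B_Q(c+1)=T_Q(c)$ gives $B_P(c+1)=T_P(c)<B_Q(c)\le T_Q(c)=B_Q(c+1)$ and simultaneously $B_Q(c+1)\le T_Q(c+1)<B_P(c+1)$, a contradiction. Therefore $\pi(P')$ and $\pi(Q')$ meet at some position $(\mu,\delta)$.

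\emph{From a common position to a vertex of $S_{i,j}$.} Suppose, for contradiction, that $(\mu,\delta)\notin S_{i,j}$, so this position is split. Since $P'$ begins at an $\LB$-vertex of the bottom row and ends at a $\TR$-vertex of the top row: whenever $P'$ visits $\w_{i,j,\LB}^{\mu,\delta}$ this is not its last vertex, so $P'$ leaves along the unique out-edge $\w_{i,j,\LB}^{\mu,\delta}\to\w_{i,j,\TR}^{\mu,\delta}$; and whenever $P'$ visits $\w_{i,j,\TR}^{\mu,\delta}$ this is not its first vertex, so $P'$ arrived along the unique in-edge, the same dotted edge. The identical reasoning applies to $Q'$, which begins at an $\LB$-vertex of the left column and ends at a $\TR$-vertex of the right column. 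So $P'$ and $Q'$ both traverse the dotted edge $\w_{i,j,\LB}^{\mu,\delta}\to\w_{i,j,\TR}^{\mu,\delta}$, contradicting edge-disjointness. Hence $(\mu,\delta)\in S_{i,j}$, this position is not split, and the single vertex $\w_{i,j}^{\mu,\delta}=\w_{i,j,\LB}^{\mu,\delta}=\w_{i,j,\TR}^{\mu,\delta}$ lies on both $P'$ and $Q'$, as required. I expect the only delicate point to be the crossing lemma — making sure the two projections really are simple monotone lattice paths (collapsing stay steps, no position revisited) and keeping the sign-change bookkeeping correct; localization and the splitting argument are short structural checks.
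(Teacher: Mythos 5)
Your proof is correct and follows essentially the same route as the paper: project/contract the split gadgets back to the plain grid $G_{i,j}$, show the two monotone lattice paths must cross by a column-by-column sign-change sweep (the paper phrases it via row-sets $P''(x),Q''(x)$ and the relations $\max P''(z)=\min P''(z+1)$, you via the block endpoints $B_P,T_P,B_Q,T_Q$), and then observe that a split common position would force both paths through the unique dotted edge, contradicting edge-disjointness. The extra localization step and the careful first/last-vertex case analysis at the end are fine (if anything, slightly more explicit than the paper), but they do not change the substance of the argument.
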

\begin{proof}
%
Let $P'',Q''$ be the paths obtained from $P',Q'$ by contracting all the dotted edges on $P',Q'$ respectively. By the construction of $G_2$ (\autoref{defn:G2}) and the splitting operation (\autoref{defn:splitting-operation}), it follows that $P'',Q''$ are $\Bo(G_{i,j}) \leadsto \To(G_{i,j}), \Le(G_{i,j}) \leadsto \Ri(G_{i,j})$ paths in $G_1$ respectively. Hence, there exist $x_1,x_2 \in [N]$ such that $P''$ is a $\w_{i,j}^{x_1,1}\to \w_{i,j}^{x_2,N}$ path and $y_1,y_2 \in [N]$ such that $Q''$ is a $\w_{i,j}^{1,y_1}\to \w_{i,j}^{N,y_2}$ path. We now show that $P''$ and $Q''$ must intersect in $G_1$
        \begin{claim}
        $P''$ and $Q''$ have a common vertex in $G_1$
        \label{claim:P"-Q"-intersect-in-G1}
        \end{claim}
        \begin{proof}
        For each $x\in [N]$ such that $x_1\leq x\leq x_2$ define $P''(x)=\big\{ y\in [N] : \w_{i,j}^{x,y}\in P'' \big\}$. For each $x\in [N]$ such that $x_1\leq x\leq x_2$ define $Q''(x)=\big\{ y\in [N] : \w_{i,j}^{x,y}\in Q'' \big\}$. We will prove the claim by showing that there exists $x^*,y^*\in [N]$ such that $y^*\in \big(P''(x^*)\cap Q''(x^*)\big)$. By the orientation of the edges in $G_{i,j}$, it follows that
            \begin{equation}\label{eqn:left-right-top-bottom-G1}
                \begin{aligned}
                &\max\ P''(z) = \min\ P''(z+1)\ \text{and } \max\ Q''(z) = \min\ Q''(z+1)\quad \forall\ x_1\leq z< x_2 \\
                &\text{If } 1\leq u\leq z\leq N\  \text{then }  \max P''(u) \leq \min\ P''(z)\ \text{and } \max\ Q''(u) \leq \min Q''(z)
                \end{aligned}
        \end{equation}

       By definition of $Q''$, we have $y_1 \in Q''(1)$ and hence $y\geq y_1 \geq 1$ for each $y\in Q''(x_1)$. If $\big(P''(x_1)\cap Q''(x_1)\big)\neq \emptyset$ then we are done. Otherwise, we have that $\min\ Q''(x_1) > \max\ P''(x_1)$ since $1\in P''(x_1)$. Now if $\big(P''(x_1 +1)\cap Q''(x_1 +1 )\big)\neq \emptyset$ then we are done. Otherwise, we have $\min Q''(x_1 + 1) > \max P''(x_1+1)$ since $\min Q''(x_1 + 1) = \max Q''(x_1)$. Continuing this way, we must find an $x^*\in \mathbb{N}$ such that $x_1\leq x^*\leq x_2$ and $\big(P''(x^*)\cap Q''(x^*)\big)\neq \emptyset$: this is because $N\in P''(x_2)$ and hence $\min Q''(x_2)\leq N = \max P''(x_2)$. Since $\big(P''(x^*)\cap Q''(x^*)\big)\neq \emptyset$ let $y^*\in \big(P''(x^*)\cap Q''(x^*)\big)$, i.e., the vertex $\w_{i,j}^{x^*, y^*}$ belongs to both $P''$ and $Q''$.
%
        %
        \end{proof}

        By~\autoref{claim:P"-Q"-intersect-in-G1}, the paths $P'',Q''$ have a common vertex in $G_1$. Let this vertex be $\w_{i,j}^{\mu,\delta}$. Viewing the paths $P'', Q''$ in $G_2$, i.e., ``un-contracting" the dotted edges (\autoref{defn:splitting-operation}), it follows that both $P'$ and $Q'$ share the dotted edge $\w_{i,j}^{\mu,\delta,\LB}\to \w_{i,j,\TR}^{\mu,\delta}$. Since $P'$ and $Q'$ are given to be edge-disjoint, this implies that the edge $\w_{i,j}^{\mu,\delta,\LB}\to \w_{i,j,\TR}^{\mu,\delta}$ cannot exist in $G_2$, i.e., $(\mu,\delta)\in S_{i,j}$ and the vertex $\w_{i,j,\LB}^{\mu,\delta}=\w_{i,j}^{\mu,\delta}=\w_{i,j,\TR}^{\mu,\delta}$ belongs to both $P'$ and $Q'$ (recall~\autoref{defn:splitting-operation}).
        This concludes the proof of~\autoref{lem:any-left-to-right-path-must-intersect-bottom-to-top-path}.
\end{proof}

\begin{lemma}
\normalfont
The instance $(k, N, \mathcal{S})$ of \gtleq has a solution.
\label{lem:gtleq-has-a-soln}
\end{lemma}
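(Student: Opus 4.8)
The plan is to combine the three structural lemmas already proved to extract a solution to \gtleq. For each $(i,j)\in[k]\times[k]$, \autoref{lem:Pi-structural-characterizations} tells us that $P_i$ contains a $\Bo(G_{i,j}^{\splitt})\leadsto\To(G_{i,j}^{\splitt})$ subpath, and \autoref{lem:Qj-structural-characterizations} tells us that $Q_j$ contains a $\Le(G_{i,j}^{\splitt})\leadsto\Ri(G_{i,j}^{\splitt})$ subpath. Since the paths $P_1,\ldots,P_k,Q_1,\ldots,Q_k$ are pairwise edge-disjoint, in particular these two subpaths inside $G_{i,j}^{\splitt}$ are edge-disjoint, so \autoref{lem:any-left-to-right-path-must-intersect-bottom-to-top-path} applies and yields a pair $(\mu,\delta)\in S_{i,j}$ such that the (unsplit) vertex $\w_{i,j}^{\mu,\delta}$ lies on both $P_i$ and $Q_j$. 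I would \emph{define} $\gamma_{i,j}:=(\mu,\delta)$ for some such choice (fixing one if there are several), so $\gamma_{i,j}\in S_{i,j}$ by construction. It remains to verify the two monotonicity constraints of \gtleq.

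For the horizontal constraint, fix $j\in[k]$ and consider $\gamma_{x,y}=(a,b)$ coming from $\w_{x,j}^{a,b}$ on $Q_j$ and $\gamma_{x+1,j}=(a',b')$ coming from $\w_{x+1,j}^{a',b'}$ on $Q_j$ (using the indexing convention of the paper where the first/second coordinates play the roles in \autoref{remark:issue-of-gtleq-swapping-indices}); I need to argue the appropriate coordinate inequality. The key point is that $Q_j$ is a single path that, by \autoref{lem:Qj-structural-characterizations}, traverses the grids $G_{1,j}^{\splitt},G_{2,j}^{\splitt},\ldots,G_{k,j}^{\splitt}$ in left-to-right order, entering $G_{x,j}^{\splitt}$ through $\Le$ and leaving through $\Ri$, with the only connections between consecutive grids being the matchings through the blue $H$-vertices that preserve the row index $\ell$. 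So if $Q_j$ passes through $\w_{x,j}^{a,b}$ at row $b$ on the right boundary it must enter $G_{x+1,j}^{\splitt}$ at row $b$ on the left boundary, and since inside a grid all vertical edges point upward, the row index can only weakly increase along $Q_j$ within $G_{x+1,j}^{\splitt}$; hence the row index of $\w_{x+1,j}^{a',b'}$ is $\geq b$. The point where $Q_j$ meets $P_x$ need not be exactly on the boundary, so I should be a little careful: $\w_{x,j}^{a,b}$ is somewhere inside $G_{x,j}^{\splitt}$, and the row index is monotone nondecreasing along the entire stretch of $Q_j$ from that vertex through the right boundary of $G_{x,j}^{\splitt}$, through the matching (which keeps the row fixed), and then from the left boundary of $G_{x+1,j}^{\splitt}$ onward to $\w_{x+1,j}^{a',b'}$; therefore $b\le b'$. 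The vertical constraint is entirely symmetric: $P_y$ (now using the column index appropriately) traverses $G_{x,1}^{\splitt},\ldots,G_{x,k}^{\splitt}$ bottom-to-top, the $V$-matchings preserve the column index, and inside a grid all horizontal edges point rightward so the column index is nondecreasing along $P_y$; combining the relevant stretch gives $a\le a'$.

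Assembling these observations, the family $\{\gamma_{x,y}\}_{1\le x,y\le k}$ satisfies $\gamma_{x,y}\in S_{x,y}$ and both monotonicity conditions, so it is a valid solution to $(k,N,\mathcal{S})$. The main obstacle I anticipate is purely bookkeeping: making sure the roles of the two coordinates (rows vs. columns) and the $x\leftrightarrow y$ indexing of \autoref{remark:issue-of-gtleq-swapping-indices} are kept straight so that the inequalities come out in the direction \gtleq demands, and carefully articulating the ``monotone along a stretch of the path'' argument across the grid$\to$matching$\to$grid transitions rather than just ``on the boundary.'' None of this is deep, but it is the part where a sign error would hide, so I would spell out the monotonicity of row/column indices along $P_i$ and $Q_j$ as an explicit sub-claim before concluding.
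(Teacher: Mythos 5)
Your proposal is correct and takes essentially the same route as the paper: apply Lemmas~\ref{lem:Pi-structural-characterizations}, \ref{lem:Qj-structural-characterizations}, and \ref{lem:any-left-to-right-path-must-intersect-bottom-to-top-path} to each block $(i,j)$ to extract $\gamma_{i,j}=(\mu_{i,j},\delta_{i,j})\in S_{i,j}$, then invoke monotonicity of the row/column index along $Q_j$/$P_i$ (forced by the edge orientation and the row/column-preserving matchings) to obtain $\delta_{i,j}\le\delta_{i+1,j}$ and $\mu_{i,j}\le\mu_{i,j+1}$. The only differences are cosmetic: you spell out the grid$\to$matching$\to$grid monotonicity argument more explicitly (the paper compresses this to ``by the orientation of the edges in $G_2$''), and there are a couple of harmless index slips in your write-up (e.g.\ ``$P_y$'' where you mean $P_x$).
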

\begin{proof}
%
Fix any $(i,j)\in [k]\times [k]$. By~\autoref{lem:Pi-structural-characterizations}, $P_i$ contains an $\Bo(G_{i,j}^{\splitt}) \leadsto \To(G_{i,j}^{\splitt})$ path say $P_{i,j}$. By~\autoref{lem:Qj-structural-characterizations}, $Q_j$ contains an $\Le(G_{i,j}^{\splitt}) \leadsto \Ri(G_{i,j}^{\splitt})$ path say $Q_{i,j}$. Since $P_i$ and $Q_j$ are edge-disjoint (\autoref{eqn:paths-Pi-Qj-disjoint}), it follows that the paths $P_{i,j}$ and $Q_{i,j}$ are also edge-disjoint. Applying~\autoref{lem:any-left-to-right-path-must-intersect-bottom-to-top-path} to the paths $P_{i,j}$ and $Q_{i,j}$ we get that there exists $(\mu_{i,j},\delta_{i,j})\in [N]\times [N]$ such that $(\mu_{i,j},\delta_{i,j})\in S_{i,j}$ and the vertex $\w_{i,j,\LB}^{\mu_{i,j},\delta_{i,j}}=\w_{i,j}^{\mu_{i,j},\delta_{i,j}}=\w_{i,j,\TR}^{\mu_{i,j},\delta_{i,j}}$ belongs to $P_{i,j}$ (and hence also to $P_i$) and $Q_{i,j}$ (and hence also to $Q_j$).

We now claim that the values $\big\{ (\mu_{i,j},\delta_{i,j}) : (i,j)\in [k]\times[k] \big\}$ form a solution for the instance $(k,N,\mathcal{S})$ of \gtleq. In the last paragraph, we have already shown that $(\mu_{i,j},\delta_{i,j})\in S_{i,j}$ for each $(i,j)\in [k]\times[k]$. For each $(i,j)\in [k-1]\times [k]$ both the vertices $\w_{i,j,\LB}^{\mu_{i,j},\delta_{i,j}}=\w_{i,j,\TR}^{\mu_{i,j},\delta_{i,j}}$ and $\w_{i+1,j,\LB}^{\mu_{i+1,j},\delta_{i+1,j}}=\w_{i+1,j,\TR}^{\mu_{i+1,j},\delta_{i+1,j}}$ belong to the path $Q_{j}$ which is contained in $G_{2}[\Horizontal(j)]$ (\autoref{lem:Qj-structural-characterizations}). Hence, by the orientation of the edges in $G_2$, it follows that $\delta_{i,j}\leq \delta_{i+1,j}$. Similarly, it can be shown that $\mu_{i,j}\leq \mu_{i,j+1}$ for each $(i,j)\in [k]\times [k-1]$.
\end{proof}

\subsection{Solution for \gtleq $\Rightarrow$ Solution for \edp
}
\label{subsec:easy-direction}

In this section, we show that if the instance $(k, N, \mathcal{S})$ of \gtleq has a solution then the instance $(G_2, \mathcal{T})$ of \edp also has a solution.

Suppose that the instance $(k, N, \mathcal{S})$ of \gtleq has a solution given by the pairs $\big\{(\alpha_{i,j}, \beta_{i,j})\ : i,j\in [k]\big\}$. Hence, we have
\begin{equation}\label{eqn:alpha-beta-gtleq-inequalities}
\begin{aligned}
        \big( \alpha_{i,j}, \beta_{i,j} \big)\in S_{i,j} & \quad \text{for each }(i,j)\in [k]\times [k]\\
        \alpha_{i,j}\leq \alpha_{i,j+1} & \quad \text{for each }(i,j)\in [k]\times [k-1] \\
        \beta_{i,j}\leq \beta_{i+1,j} & \quad \text{for each }(i,j)\in [k-1]\times [k]
\end{aligned}
\end{equation}

\begin{definition}
\normalfont
\textbf{(row-paths and column-paths in $G_2$)} For each $(i,j)\in [k]\times [k]$ and $\ell\in [N]$ we define
\begin{itemize}
  \item $\Row_{\ell}(G_{i,j}^{\splitt})$ to be the $\w_{i,j,\LB}^{1,\ell}\leadsto \w_{i,j,\TR}^{N,\ell}$ path in $G_2[G_{i,j}^{\splitt}]$ consisting of the following edges (in order): for each $r\in [N-1]$
        \begin{itemize}
          \item $\w_{i,j,\LB}^{r,\ell}\to \w_{i,j,\TR}^{r,\ell}$ and $\w_{i,j,\TR}^{r,\ell}\to \w_{i,j,\LB}^{r+1,\ell}$
        \end{itemize}
        followed finally by the edge $\w_{i,j,\LB}^{N,\ell}\to \w_{i,j,\TR}^{N,\ell}$

  \item $\Column_{\ell}(G_{i,j}^{\splitt})$ to be the $\w_{i,j,\LB}^{\ell,1}\leadsto \w_{i,j,\TR}^{\ell,N}$ path in $G_2$ consisting of the following edges (in order): for each $r\in [N-1]$
        \begin{itemize}
          \item $\w_{i,j,\LB}^{\ell,r}\to \w_{i,j,\TR}^{\ell,r}$ and $\w_{i,j,\TR}^{\ell,r}\to \w_{i,j,\LB}^{\ell,r+1}$
        \end{itemize}
        followed finally by the edge $\w_{i,j,\LB}^{\ell,N}\to \w_{i,j,\TR}^{\ell,N}$
\end{itemize}
\label{defn:Column-Row-G2}
\end{definition}
Using the special types of paths from~\autoref{defn:Column-Row-G2}, we can now show the following lemma:

\begin{lemma}
\label{lem:Ri-Tj-form-a-solution-for-edp}
The instance $(G_2, \mathcal{T})$ of \edp has a solution.
\end{lemma}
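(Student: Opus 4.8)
The plan is to build the $2k$ edge-disjoint paths explicitly from the \gtleq solution $\big\{(\alpha_{i,j},\beta_{i,j})\big\}$, using the row-paths and column-paths of~\autoref{defn:Column-Row-G2} as the pieces inside each grid $G_{i,j}^{\splitt}$. For each $i\in[k]$ I would define the path $P_i$ (connecting $a_i$ to $b_i$) so that, inside each grid $G_{i,j}^{\splitt}$ on its way up the $i$-th vertical strip, it traverses $\Column_{\alpha_{i,j}}(G_{i,j}^{\splitt})$, i.e.\ it uses column number $\alpha_{i,j}$; between consecutive grids $G_{i,j}^{\splitt}$ and $G_{i,j+1}^{\splitt}$ it walks along the blue path $\Path(V_{i,j}^{i,j+1})$ from position $\alpha_{i,j}$ to position $\alpha_{i,j+1}$ (this is where the inequality $\alpha_{i,j}\le\alpha_{i,j+1}$ is used, so that this sub-walk respects the left-to-right orientation of $\Path(V_{i,j}^{i,j+1})$), and at the two ends it uses the appropriate magenta edges from $\Source(A)$ and $\Sink(B)$. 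Symmetrically, for each $j\in[k]$ I would define $Q_j$ (connecting $c_j$ to $d_j$) to traverse $\Row_{\beta_{i,j}}(G_{i,j}^{\splitt})$ inside each grid $G_{i,j}^{\splitt}$, to move along $\Path(H_{i,j}^{i+1,j})$ from position $\beta_{i,j}$ to $\beta_{i+1,j}$ between consecutive grids (using $\beta_{i,j}\le\beta_{i+1,j}$), and to use the magenta edges of $\Source(C)$ and $\Sink(D)$ at the ends.

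The next step is to check these are genuine $a_i\leadsto b_i$ and $c_j\leadsto d_j$ walks; this is a straightforward concatenation argument: $\Column_{\alpha_{i,j}}(G_{i,j}^{\splitt})$ ends at $\w_{i,j,\TR}^{\alpha_{i,j},N}\in\To(G_{i,j}^{\splitt})$, which feeds via $\Matching(G_{i,j},V_{i,j}^{i,j+1})$ into $\ve_{i,j}^{i,j+1}(\alpha_{i,j})$, then along the blue path to $\ve_{i,j}^{i,j+1}(\alpha_{i,j+1})$, then via $\Matching(V_{i,j}^{i,j+1},G_{i,j+1})$ into $\w_{i,j+1,\LB}^{\alpha_{i,j+1},1}\in\Bo(G_{i,j+1}^{\splitt})$, which is exactly the start of $\Column_{\alpha_{i,j+1}}(G_{i,j+1}^{\splitt})$; the magenta edges $a_i\to\w_{i,1}^{\alpha_{i,1},1}$ and $\w_{i,N}^{\alpha_{i,N},N}\to b_i$ close things off. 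The analogous check works for $Q_j$.

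The heart of the argument — and the step I expect to be the main obstacle — is edge-disjointness of the whole collection $\{P_1,\dots,P_k,Q_1,\dots,Q_k\}$. I would organize this by the location of a shared edge. The $P_i$'s live in pairwise-disjoint vertical strips $\Vertical(i)$ and the $Q_j$'s in pairwise-disjoint horizontal strips $\Horizontal(j)$ (as noted after~\autoref{defn:horizontal-vertical-sets}), so $P_i$ and $P_{i'}$ (resp.\ $Q_j$ and $Q_{j'}$) with $i\neq i'$ share no edge at all. The only magenta/blue/matching edges used are the ones incident to the chosen positions, and within a given strip only one path runs through them, so no conflict arises on $\Source$/$\Sink$ edges, on the $\Path(V_{i,j}^{i,j+1})$ / $\Path(H_{i,j}^{i+1,j})$ segments, or on the matching edges. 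The remaining and crucial case is an edge inside some grid $G_{i,j}^{\splitt}$ shared between $P_i$ (using column $\alpha_{i,j}$) and $Q_j$ (using row $\beta_{i,j}$). Here I would argue that $\Column_{\alpha_{i,j}}(G_{i,j}^{\splitt})$ uses only edges of the form $\w_{i,j,\LB}^{\alpha_{i,j},r}\to\w_{i,j,\TR}^{\alpha_{i,j},r}$ and $\w_{i,j,\TR}^{\alpha_{i,j},r}\to\w_{i,j,\LB}^{\alpha_{i,j},r+1}$, while $\Row_{\beta_{i,j}}(G_{i,j}^{\splitt})$ uses only $\w_{i,j,\LB}^{r,\beta_{i,j}}\to\w_{i,j,\TR}^{r,\beta_{i,j}}$ and $\w_{i,j,\TR}^{r,\beta_{i,j}}\to\w_{i,j,\LB}^{r+1,\beta_{i,j}}$. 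A common edge would force a common vertex, which after the splitting operation can only be the un-split vertex $\w_{i,j}^{\alpha_{i,j},\beta_{i,j}}$ (at row $\beta_{i,j}$, column $\alpha_{i,j}$); but a horizontal-type edge of $\Row$ at that vertex leaves it via $\w_{i,j,\TR}^{\alpha_{i,j},\beta_{i,j}}\to\w_{i,j,\LB}^{\alpha_{i,j}+1,\beta_{i,j}}$, whereas a vertical-type edge of $\Column$ leaves via $\w_{i,j,\TR}^{\alpha_{i,j},\beta_{i,j}}\to\w_{i,j,\LB}^{\alpha_{i,j},\beta_{i,j}+1}$, and the incoming edges are likewise distinct, so in fact $\Column_{\alpha_{i,j}}$ and $\Row_{\beta_{i,j}}$ share no edge at all (they share only the single vertex $\w_{i,j}^{\alpha_{i,j},\beta_{i,j}}$, which is allowed since we only need edge-disjointness; note $(\alpha_{i,j},\beta_{i,j})\in S_{i,j}$ guarantees this vertex is not split, so the paths really can both pass through it). Assembling all cases gives that the $2k$ paths are pairwise edge-disjoint, and together with the first two steps this proves~\autoref{lem:Ri-Tj-form-a-solution-for-edp}.
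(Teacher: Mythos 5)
Your proposal is correct and follows the same route as the paper's proof: the explicit construction via $\Column_{\alpha_{i,j}}$ and $\Row_{\beta_{i,j}}$, the inter-grid segments along the blue paths enabled by the $\leq$-inequalities, disjointness of distinct $P_i$'s (resp.\ $Q_j$'s) from the disjoint $\Vertical$ (resp.\ $\Horizontal$) strips, and finally the observation that $P_i$ and $Q_j$ can only meet at $\w_{i,j}^{\alpha_{i,j},\beta_{i,j}}$, which is un-split because $(\alpha_{i,j},\beta_{i,j})\in S_{i,j}$, so they share a vertex but no edge. Aside from a small typo in the terminal edge for $b_i$ (it should be $\w_{i,k,\TR}^{\alpha_{i,k},N}\to b_i$), this matches the paper's argument.
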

\begin{proof}
We build a collection of $2k$ paths $\mathcal{P}:=\big\{ R_1, R_2, \ldots, R_k, T_1, T_2, \ldots, T_k\big\}$ and show that it forms a solution for the instance $(G_2, \mathcal{T})$ of \edp. First, we describe this collection of paths below:
\begin{description}

\item[- Description of the set of paths $\{R_1, R_2, \ldots, R_k\}$]: \\
        For each $i\in [k]$, we build the path $R_i$ as follows:
\begin{itemize}
  \item Start with the edge $a_i \to \w_{i,1,\LB}^{\alpha_{i,1},1}$
  \item For each $j\in [k-1]$ use the $\w_{i,j,\LB}^{\alpha_{i,j},1} \leadsto \w_{i,j+1,\LB}^{\alpha_{i,j+1},1}$ path obtained by concatenating
            \begin{itemize}
              \item the $\w_{i,j,\LB}^{\alpha_{i,j},1}\leadsto \w_{i,j,\TR}^{\alpha_{i,j},N}$ path $\Column_{\alpha_{i,j}}(G_{i,j}^{\splitt})$ from~\autoref{defn:Column-Row-G2}
              \item the $\w_{i,j,\TR}^{\alpha_{i,j},N} \leadsto \w_{i,j+1,\LB}^{\alpha_{i,j+1},1}$ path $\w_{i,j,\TR}^{\alpha_{i,j},N} \to \ve_{i,j}^{i,j+1}(\alpha_{i,j}) \to \cdots\cdots \to \ve_{i,j}^{i,j+1}(\alpha_{i,j+1}) \to \w_{i,j+1,\LB}^{\alpha_{i,j+1},1} $ which exists since~\autoref{eqn:alpha-beta-gtleq-inequalities} implies $\alpha_{i,j}\leq \alpha_{i,j+1}$.
            \end{itemize}
  \item Now, we have reached the vertex $\w_{i,k,\LB}^{\alpha_{i,k},1}$. Use the $\w_{i,k,\LB}^{\alpha_{i,k},1} \leadsto \w_{i,k,\TR}^{\alpha_{i,k},N}$ path  \linebreak $\Column_{\alpha_{i,k}}(G_{i,k}^{\splitt})$ from~\autoref{defn:Column-Row-G2} to reach the vertex $\w_{i,k,\TR}^{\alpha_{i,k},N}$.
  \item Finally, use the edge $\w_{i,k,\TR}^{\alpha_{i,k},N} \to b_i$ to reach $b_i$.
\end{itemize}

\item[- Description of the set of paths $\{T_1, T_2, \ldots, T_k\}$]:\\
For each $j\in [k]$, we build the path $T_j$ as follows:
\begin{itemize}
  \item Start with the edge $c_j \to \w_{1,j,\LB}^{1,\beta_{1,j}}$
  \item For each $i\in [k-1]$ use the $\w_{i,j,\LB}^{1,\beta_{i,j}} \leadsto \w_{i+1,j,\LB}^{1,\beta_{i+1,j}}$ path obtained by concatenating
            \begin{itemize}
              \item the $\w_{i,j,\LB}^{1,\beta_{i,j}}\leadsto \w_{i,j,\TR}^{N,\beta_{i,j}}$ path $\Row_{\beta_{i,j}}(G_{i,j}^{\splitt})$ from~\autoref{defn:Column-Row-G2}
              \item the $\w_{i,j,\TR}^{N,\beta_{i,j}} \leadsto \w_{i+1,j,\LB}^{1,\beta_{i+1,j}}$ path $\w_{i,j,\TR}^{N,\beta_{i,j}} \to \h_{i,j}^{i+1,j}(\beta_{i,j}) \to \cdots\cdots \to \h_{i,j}^{i+1,j}(\beta_{i+1,j}) \to \w_{i+1,j,\LB}^{1,\beta_{i+1,j}} $ which exists since~\autoref{eqn:alpha-beta-gtleq-inequalities} implies $\beta_{i,j}\leq \beta_{i+1,j}$.
            \end{itemize}
  \item Now, we have reached the vertex $\w_{k,j,\LB}^{1,\beta_{k,j}}$. Use the $\w_{k,j,\LB}^{1,\beta_{k,j}} \leadsto \w_{k,j,\TR}^{N,\beta_{k,j}}$ path \linebreak $\Row_{\beta_{k,j}}(G_{k,j}^{\splitt})$ from~\autoref{defn:Column-Row-G2} to reach the vertex $\w_{k,j,\TR}^{N,\beta_{k,j}}$.
  \item Finally, use the edge $\w_{k,j,\TR}^{N,\beta_{k,j}} \to d_j$ to reach $d_j$.
\end{itemize}

\end{description}

By~\autoref{defn:horizontal-vertical-sets}, it follows that every edge of the path $R_i$ has both endpoints in $\Vertical(i)$ for every $i\in [k]$. Since $\Vertical(i) \cap \Vertical(i')=\emptyset$ for every $1\leq i\neq i'\neq k$, it follows that the collection of paths $\{R_1, R_2, \ldots, R_k\}$ are pairwise edge-disjoint.

By~\autoref{defn:horizontal-vertical-sets}, it follows that every edge of the path $T_j$ has both endpoints in $\Horizontal(j)$ for every $j\in [k]$. Since $\Horizontal(j) \cap \Horizontal(j')=\emptyset$ for every $1\leq j\neq j'\neq k$, it follows that the collection of paths $\{T_1, T_2, \ldots, T_k\}$ are pairwise edge-disjoint.

Fix any $(i,j)\in [k]\times [k]$. We now conclude the proof of this lemma by showing that $R_i$ and $T_j$ are edge-disjoint. By the construction of $G_2$ (\autoref{fig:main} and~\autoref{fig:split}) and definitions of the paths $R_i$ and $T_j$, it follows that the only common edge between $R_i$ and $T_j$ could be $\w_{i,j,\LB}^{\alpha_{i,j}, \beta_{i,j}}\to \w_{i,j,\TR}^{\alpha_{i,j}, \beta_{i,j}}$. By~\autoref{eqn:alpha-beta-gtleq-inequalities}, we have that $(\alpha_{i,j}, \beta_{i,j})\in S_{i,j}$. Hence, by the splitting operation (\autoref{defn:splitting-operation}), we have that $\w_{i,j,\LB}^{\alpha_{i,j}, \beta_{i,j}} = \w_{i,j}^{\alpha_{i,j}, \beta_{i,j}} = \w_{i,j,\TR}^{\alpha_{i,j}, \beta_{i,j}}$, i.e., the only possible common edge $\w_{i,j,\LB}^{\alpha_{i,j}, \beta_{i,j}}\to \w_{i,j,\TR}^{\alpha_{i,j}, \beta_{i,j}}$ between $R_i$ and $T_j$ is not an edge in $G_2$. Hence, $R_i$ and $T_j$ are edge-disjoint.
\end{proof}

\subsection{Proof of~\autoref{thm:main-result}}
\label{subsec:proof-of-main-theorem}

Finally we are ready to prove our main theorem (\autoref{thm:main-result}) which is restated below:
\mainthm*
\begin{proof}
Given an instance $(k,N,\mathcal{S})$ of \gtleq, we use the construction from~\autoref{subsec:construction} to build an instance $(G_2, \mathcal{T})$ of \edp such that $G_2$ is a planar DAG (\autoref{claim:G2-is-planar-and-dag}). It is easy to see that $n=|V(G_2)|=O(N^{2}k^{2})$ and $G_2$ can be constructed in $\poly(N,k)$ time.


It is known
~\citep[Theorem 14.30]{fpt-book} that \gtleq is W[1]-hard parameterized by $k$, and under ETH cannot be solved in $f(k)\cdot N^{o(k)}$ time for any computable function $f$. Combining the two directions from~\autoref{subsec:hard-direction} and~\autoref{subsec:easy-direction}, we get a parameterized reduction from \gtleq to an instance of \edp which is a planar DAG and has $|\mathcal{T}|=2k$ terminal pairs. Hence, it follows that \edp on planar DAGs is W[1]-hard parameterized by number $k$ of terminal pairs, and under ETH cannot be solved in $f(k)\cdot n^{o(k)}$ time for any computable function $f$.

Finally we show how to edit $G_2$, without affecting the correctness of the reduction, so that both the max out-degree and max in-degree are at most $2$. We present the argument for reducing the out-degree: the argument for reducing the in-degree is analogous. Note that the only vertices in $G_2$ with out-degree $>2$ are $A\cup C$. For each $c_j\in C$ we replace the directed star whose edges are from $c_j$ to each vertex of $\Le(G_{1,j})$ with a directed binary tree whose root is $c_i$, leaves are the set of vertices $\Le(G_{1,j})$ and each edge is directed away from the root. It is easy to see that in this directed binary tree the set of paths from $c_j$ to the different leaves (i.e.,vertices of $\Le(G_{1,j})$) are pairwise edge-disjoint, and we have only increased the number of vertices by $O(k)$ while maintaining both planarity and (directed) acyclicity. We do a similar transformation for each $a_i\in A$. It is easy to see that this editing adds $O(k^2)$ new vertices and takes $\poly(k)$ time, and therefore it is still true that $n=|V(G_2)|=O(N^{2}k^{2})$ and $G_2$ can be constructed in $\poly(N,k)$ time.
\end{proof}

\section{Conclusion \& Open Questions}
\label{sec:conclusion}

In this paper we have shown that \edp on planar DAGs is W[1]-hard parameterized by $k$, and has no $f(k)\cdot n^{o(k)}$ algorithm under the Exponential Time Hypothesis (ETH) for any computable function $f$. The hardness holds even if both the maximum in-degree and maximum out-degree of the graph are at most $2$. Our result answers a question of Slivkins~\cite{DBLP:journals/siamdm/Slivkins10} regarding the parameterized complexity of \edp on planar DAGS, and a question of Cygan et al.~\cite{DBLP:conf/focs/CyganMPP13} and Schrijver~\cite{schrijver-building-bridges-II} regarding the parameterized complexity of \edp on planar directed graphs.

We now propose some open questions related to the complexity of the \disjp problem:
\begin{itemize}
  \item What is the \emph{correct} parameterized complexity of \edp on planar graphs parameterized by $k$? Can we design an XP algorithm, or is the problem NP-hard even for $k=O(1)$ like the general version? Note that to prove the latter result, one would need to have directed cycles involved in the reduction since there is $n^{O(k)}$ algorithm of Fortune et al.~\cite{DBLP:journals/tcs/FortuneHW80} for \edp on DAGs.
  \item Is the half-integral version\footnote{Each edge can belong to at most two of the paths} of \edp FPT on directed planar graphs or DAGs? It is easy to see that our W[1]-hardness reduction does not work for this problem.
  \item Given our W[1]-hardness result, can we obtain FPT (in)approximability results for the \edp problem on planar DAGs? To the best of our knowledge, there are no known (non-trivial) FPT (in)approximability results for any variants of the \disjp problem. This question might be worth considering even for those versions of the \disjp problem which are known to be FPT since the running times are astronomical (except maybe~\cite{DBLP:conf/stoc/LokshtanovMP0Z20}). Some of the recent work~\cite{julia-2,julia-3,julia-4,julia-5} on polynomial time (in)approximability of the \disjp problem might be relevant.
\end{itemize}


\subsection*{Acknowledgements}

We thank the anonymous reviewers of CIAC 2021 for their helpful comments. In particular, one of the reviewers suggested the strengthening of~\autoref{thm:main-result} for the case when the input graph has both in-degree and out-degree at most $2$.


%

\bibliography{papers}
\bibliographystyle{plainnat}

\end{document}